\theoremstyle{plain}
\newtheorem{theorem}{Theorem}
\newtheorem{lemma}[theorem]{Lemma}
\theoremstyle{definition}
\theoremstyle{remark}
\newtheorem{remark}[theorem]{Remark}
\DeclareMathOperator{\Span}{Span}
\newcommand{\cl}{\mathrm{cl}}
\newcommand{\exc}{\mathrm{exc}}
\newcommand{\eff}{\mathrm{eff}}
\newcommand{\Bog}{\mathrm{Bog}}
\def\bq{\begin{eqnarray}}
\def\eq{\end{eqnarray}}
\def\bqq{\begin{align*}}
\def\eqq{\end{align*}}
\def\be{\begin{equation}}
\def\ee{\end{equation}}
\def\nn{\nonumber}
\renewcommand{\epsilon}{\varepsilon}
\def\ri{\mathrm{i}}
\def\cF {\mathcal{F}}
\def\cH{\mathcal{H}}
\def\cK{\mathcal{K}}
\def\R {\mathbb{R}}
\def\C {\mathbb{C}}
\def\Z {\mathbb{Z}}
\def\R {\mathbb{R}}
\def\C {\mathbb{C}}
\def\d{\,{\rm d}}
\newcommand{\kb}{{\mathbf k}}
\newcommand{\lb}{{\mathbf l}}
\newcommand{\pb}{{\mathbf p}}
\newcommand{\qb}{{\mathbf q}}
\newcommand{\xb}{{\mathbf x}}
  \newcommand{\e}{\mathrm{e}}
\title[Beliaev damping in Bose gas]{Beliaev damping in Bose gas}
\author[J. Derezi\'nski]{Jan  Derezi\'nski}
\address{Department of Mathematical Methods in Physics, Faculty of Physics, University of Warsaw,  Pasteura 5, 02-093 Warszawa, Poland}
\email{jan.derezinski@fuw.edu.pl} 
\author[B. Li]{Ben Li}
\address{Department of Mathematical Methods in Physics, Faculty of Physics, University of Warsaw,  Pasteura 5, 02-093 Warszawa, Poland}
\email{ben.li@fuw.edu.pl} 
\author[M. Napi\'orkowski]{Marcin Napi\'orkowski}
\address{Department of Mathematical Methods in Physics, Faculty of Physics, University of Warsaw,  Pasteura 5, 02-093 Warszawa, Poland}
\email{marcin.napiorkowski@fuw.edu.pl}
\begin{document}
\date{\today}
\reversemarginpar

\begin{abstract}  According to the Bogoliubov theory the low energy behaviour
  of the Bose gas at zero temperature can be described by non-interacting
  bosonic quasiparticles called phonons. In this work 
  the damping rate of phonons at low momenta, the so-called Beliaev
  damping, is explained and computed with  simple arguments
  involving the Fermi Golden Rule and Bogoliubov's quasiparticles.
\end{abstract}

\maketitle

\section{Introduction}

The  Bose gas near the zero temperature has
curious properties that
can be partly explained from the first principles  by a beautiful argument 
that goes back to Bogoliubov \cite{Bogoliubov-47}. In Bogoliubov's  approach the Bose gas
at zero temperature
can be approximately described by a  gas of weakly interacting
quasiparticles. The  dispersion relation of these quasiparticles, that
is, their energy in function of the momentum is described by a function
 $\kb\mapsto e_\kb$ with an interesting shape. At low
momenta these quasiparticles are called phonons and $e_\kb\approx ck$, where
$c>0$ and $k:=|\kb|$. Thus the low-energy dispersion relation is very
different from the non-interacting, quadratic one. It is responsible
for superfluidity  of the Bose gas.

It is easy to
 see that phonons   could be metastable, because the energy-momentum
 conservation  may not
 prohibit them to decay into two or more phonons. This decay rate was first
 computed in perturbation theory by Beliaev \cite{Beliaev-58}, hence the name {\it Beliaev damping}. According to his computation, the
 imaginary part of the dispersion relation behaves for small momenta as $-
 c_\mathrm{Bel}k^5$.
 This implies the  exponential decay of phonons with the
 decay rate
 $2c_\mathrm{Bel}k^5$. The Beliaev damping has been observed in
 experiments, and appears to be consistent with its theoretical
 predictions
 \cite{Katzetall-02,Hodbyetall-01}.

 In our paper we present a systematic derivation of  Beliaev
 damping. Our presentation differs in several points from similar
 accounts found in the physics literature. We try to make all the arguments
 as transparent as possible, without hiding some of less rigorous
 steps. We avoid using diagrammatic techniques, in favor of
 a mathematically much clearer picture involving a Bogoliubov
 transformation and the 2nd order perturbation computation (the
 so-called Fermi Golden Rule) applied to what we call the effective Friedrichs Hamiltonian.
 We use the grand-canonical picture instead of the canonical one
 found in a part of the  literature. This is a minor difference;
  on this level both pictures should lead to the same final result.
We believe that the derivation of Beliaev damping is a beautiful
illustration of methods
 many-body quantum physics, which is quite convincing even if not fully rigorous.

In the remaining part of the
introduction we provide a brief sketch of the main steps of 
Beliaev's argument. In the main body of our article we discuss these steps
in more detail, indicating which parts can be easily made rigorous.

Let $v$ be a  real function satisfying $v(x)=v(-x)$.
  Later on
  we will need more
assumptions: in particular, we will assume that $v(x)$ is rotationally invariant,   both $v(x)$ and   its
Fourier transform $\hat v(\kb)$   decay
sufficiently fast at infinity and that  $\hat v(\kb)\geq0$.
The homogeneous Bose gas of $N$ particles interacting with the  pair
potential $v$ is described by the Hamiltonian and the total momentum
\begin{align}\label{papo}
  H_N&=-\sum_{i=1}^N\frac{1}{2m}\Delta_i+\sum_{1\leq i<j\leq N}v(x_i-x_j),\\
P_N&=\sum_{i=1}^N\frac{1}{\ri}\partial_{x_i}.\label{papo1}\end{align}
These operators act on $L_\mathrm{s}^2\big((\R^3)^N\big)$, the space
of functions symmetric in the positions of $N$
3-dimensional particles.
Note that $H_N$ commutes with $P_N$, which expresses the 
 spatial homogeneity of the system.  From now on we will set $m=1$.

 We would like to describe a Bose gas of positive density in infinite
volume. This is difficult to do in terms of the Hamiltonian acting on
the whole space $\mathbb{R}^3$. Therefore we replace \eqref{papo}
and \eqref{papo1} with a system enclosed
in a box of size $L$, and then take the thermodynamic limit. In order to
preserve translation symmetry we consider periodic boundary
conditions.
They are not very physical, but it is believed that they should not
affect the overall picture in the thermodynamic limit.

Thus $v$ is replaced by its periodized version adapted to the box
of size $L$. The new
Hilbert space is $L_\mathrm{s}^2\big(([-L/2,L/2]^3)^N\big)$. We will use the same
symbols $H_N,P_N$ to denote the Hamiltonian and total momentum in the
box. Note that they still commute with one another.

It is very convenient to consider at the same time  all numbers of
particles. In order to control the density, that is $\frac{N}{L^3}$, we introduce
the chemical potential given by a positive number $\mu>0$, and we
use the grand-canonical formalism. It is also convenient to pass from
the position to the momentum representation.
Thus we replace $H_N,P_N$ with
\begin{align}
  H&:=\mathop\oplus\limits_{N=0}^\infty (H_N-\mu N)=\int a_x^*\Big(-\frac{1}2\Delta_x-\mu\Big)  a_x\d x+
\frac12  \int\int\d   x\d  y v(x-y)a_x^ *  a_y^ *  a_y a_x\notag\\
    &=\sum_\pb \Big(\frac12\pb^2-\mu\Big)a_\pb^ *  a_\pb\d \pb\label{haha1a}+
\frac{1}{2L^{3}}\sum_\pb\sum_\qb\sum_\kb\hat v(\kb)a_{\pb+\kb}^ *
      a_{\qb-\kb}^ *  a_\qb a_\pb,\\
P&:=\mathop\oplus\limits_{N=0}^\infty P_N=\int a_x^*
   \frac{1}{\ri}\partial_x a_x\d x=\sum_\pb \pb a_\pb^ *  a_\pb.\label{haha2a}
\end{align}
$a_x^*$ and $a_x$ are the
usual creation/annihilation operators for $x\in[-L/2,L/2]^3$ in the
position representation, commuting
to the Dirac delta. $a_\pb^*, a_\pb$ are the usual
creration/annihilation operators for $\pb\in2\pi\mathbb{Z}^3/ L$ in the momentum representation 
commuting to the Kronecker delta.
 $H,P$ act on the bosonic Fock space with the one-particle space
 $L^2\big([-L/2,L/2]^3\big)$ in the position representation, and
 $l^2\big(2\pi\mathbb{Z}^3/ L\big)$ in the momentum
 representation.
 $H$ and $P$ still commute with one another.  

Now there comes the main idea of the Bogoliubov approach. At zero
temperature, one expects complete Bose--Einstein condensation. This is
expressed by assuming
that  the zero mode is populated macroscopically and  there are only very few particles in nonzero
modes.
The zero mode is treated
classically, and essentially removed from the picture. One obtains an approximate
Hamiltonian, which does not preserve the number of particles. One
argues that its most important component is the quadratic part which
involves operators of the form $a_\kb a_{-\kb}$,  $a_\kb^* a_{-\kb}^*$
and $a_\kb^*a_\kb$, $\kb\neq0$. It can be diagonalized by  a linear
transformation which mixes
creation and annihilation operators, called since \cite{Bogoliubov-47} a {\em Bogoliubov
transformation}, and becomes
\begin{eqnarray}\label{hambo}
  \qquad H_\Bog&:=& \sum_{\kb\neq 0} e_\kb b_\kb^* b_\kb,\\
  \label{coeff:bogdisp1}
e_\kb&:=& \sqrt{\frac{1}{4}|\kb|^4+
          \frac{\hat{v}(\kb)}{\hat{v}(0)}\mu|\kb|^2}.
\end{eqnarray}
Thus, the Bogoliubov approximation states that
\begin{equation}\label{eq:Bog_approx}
  H \approx E_\Bog+H_\Bog 
\end{equation}
where $E_\Bog$ is a constant, which will not be relevant for our analysis.
The operator $b_\kb^*$ is the creation operator
of the  {\em quasiparticle} with momentum $\kb$. It is a linear combination of $a_\kb^*, 
a_{-\kb}$. \eqref{hambo} is sometimes called a {\em Bogoliubov
Hamiltonian}. It describes independent quasiparticles with the
{\em dispersion relation} $e_\kb$. The {\em Bogoliubov vacuum}, annihilated by
$b_\kb$ and denoted $\Omega_\Bog$, is its ground state, and can be treated as an approximate
ground state of the many-body system.
The Bogoliubov Hamiltonian
 is still translation invariant: in fact, it commutes with the total momentum,
 described (without any approximation) by
 \be
 P= \sum_{\kb\neq 0} \kb b_\kb^* b_\kb.\label{hambo.}
 \ee

It is easy to describe the thermodynamic limit of \eqref{hambo} and  \eqref{hambo.}: we simply replace the
summation  by integration, without changing the
dispersion relation:
\begin{align}\label{hambo1}
  H_\Bog&= \int e_\kb b_\kb^* b_\kb\d\kb,\\
 P&= \int \kb b_\kb^* b_\kb\d\kb.
  \end{align}

   It is interesting to visualize 
 possible energy-momentum values predicted by the Bogoliubov approximation or, in
        a more precise mathematical language, the joint spectrum of
        the total momentum $P$ and the Bogoliubov Hamiltonian
        $H_\mathrm{Bog}$. 
        On the 1-quasiparticle
        space this joint spectrum is given by the graph of the
        function $\kb\mapsto e_\kb$.
  On fig. \ref{fig1} 
   we show a typical form of the dispersion
relation in the low momentum region, marked with the black line. We
assume that the potential $v$ satisfies the usual assumptions stated
before \eqref{papo} and the second derivative of $\hat v$ near zero is small 
  enough. 
The green line 
denotes the bottom of the 2-quasiparticle spectrum,
that is the joint spectrum of $(H_\Bog,P)$ in the 2-quasiparticle sector.
The bottom of 
the full joint spectrum of
$(H_\Bog,P)$ is marked with an orange dashed line. \footnote{
  Strictly speaking, figures 1 and 2  should be
    interpreted as follows.
    We choose coordinates, so that $P=(P_1,P_2,P_3)$.
    We assume that $P_2=P_3=0$, on the
   horizontal axis we put $P_1$, and on the vertical axis $H$. The full
   $4$-dimensional joint spectrum is rotationally invariant in $P$,
   hence easily reconstructed from our pictures.}  For more details concerning the construction of the excitation spectrum we refer to \cite{CorDerZin-09,DerNap-13,DerMeiNap-13}.
  
\begin{figure}[ht]
  \centering 
  \includegraphics[width=0.7\textwidth]{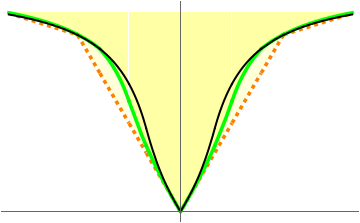}   
 \caption{Joint spectrum of $(H_\Bog,P)$ for generic potentials}
\label{fig1}
\end{figure}

  One can perform an additional step in the Bogoliubov approach. If
  the potential $v$ has a very small support, one can argue that
  $\frac{\hat v(\kb)}{\hat v(0)}$ can be approximated by $1$. One then 
  usually says that the interaction is given by {\em contact potentials},
  which  in the physics literature are often presented in the position representation
  as $v(x)=4\pi a\delta(x)$, where $a$ is a constant, called the scattering
  length.
Strictly speaking, this is however  not correct. The delta function
  needs a renormalization to become a well-defined interaction in the
  two-body case; in the $N$-body case the situation is even more
  problematic.
 In some cases one can  justify this approximation  in
the dilute case using the so-called
{\em Gross-Pitaevski limit}.
Anyway, in this approximation we
  obtain a simpler dispersion relation
\be\label{coeff:bogdisp2} 
e_\kb= \sqrt{\frac{1}{4}|\kb|^4+
          \mu|\kb|^2}.
        \ee
        On fig. \ref{fig2} we show the  energy-momentum
        spectrum corresponding to \eqref{coeff:bogdisp2}.
\begin{figure}[ht]
  \centering 
  \includegraphics[width=0.7\textwidth]{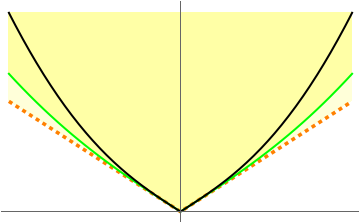}   
 \caption{Joint spectrum of $(H_\Bog,P)$ for contact potentials}
\label{fig2}
\end{figure}

The Hamiltonian $H_\Bog$, both with the dispersion relation 
        \eqref{coeff:bogdisp1} and \eqref{coeff:bogdisp2} has
        remarkable physical consequences.
Note first that the dispersion relation $\kb\mapsto e_\kb$ has a linear cusp at the
        bottom. It also has a positive critical velocity, that is,
        \be
        c_\mathrm{crit}:=\sup\{c\ |\ e_\kb\geq ck,\quad \kb\in\mathrm{R}^3\}>0.\ee
        In other words, the graph $\kb\mapsto e_\kb$ is above
        $\kb\mapsto c_\mathrm{crit} k$. The full joint spectrum
        $\sigma(P,H_\Bog)$ is still above         $\kb\mapsto
        c_\mathrm{crit} k$.  This is interpreted as one of the most
        important properties of superfluidity: a droplet of the Bose
        gas travelling with velocity less than $
        c_\mathrm{crit} k$ has negligible friction (see e.g. \cite{CorDerZin-09}).

        Of course, $H_\Bog$ yields only an approximate description of
        the Bose gas. In reality, one cannot treat the quasiparticles
        given by $b_\kb^*,b_\kb$ as fully independent. In the
        derivation of the Bogoliubov Hamiltonian various terms were
        neglected. In particular, terms of the third and fourth degree
        in $b_\kb^*,b_\kb$ were dropped. Replacing $v$ by $\kappa v$
        we obtain  an (artificial)  coupling constant, to be set to $1$ at the end.
        The third order terms are multiplied by
        $\sqrt{\kappa}$
and the  quartic
        terms by $\kappa$. We argue that the quartic terms are of
        lower order and can  be dropped.      The third order terms have the form
        \begin{align}\label{triple0}
&\frac{1}{\sqrt{L^3}}\sum_{\kb,\pb,\kb+\pb\neq0}
                        u_{\kb,\pb}b_\kb^*b_\pb^*b_{\kb+\pb}+
                        \overline{u_{\kb,\pb}}b_{\kb+\pb}b_\kb^*b_\pb^*\\
          +&\frac{1}{\sqrt{L^3}}\sum_{\kb,\pb,\kb+\pb\neq0}
                        w_{\kb,\pb}b_\kb^*b_\pb^*b_{-\kb-\pb}^*+
                        \overline{w_{\kb,\pb}}b_{-\kb-\pb}b_\kb b_\pb.\label{triple}
                        \end{align}
We will argue (see Section \ref{sec:cubicirrelevant}) that  triple creation and triple annihilation terms
do not contribute to the decay of phonons. Thus we drop
also \eqref{triple}.

Let us investigate what happens with the quasiparticle state
$b_\kb^*\Omega_\Bog$ under the perturbation
\eqref{triple0}.  To this end we first
  need to check with which states have non-zero matrix
  elements with  $b_\kb^*\Omega_\Bog$.  We easily see that  it is directly
  coupled by \eqref{triple0}  only to the
2-quasiparticle sector. By taking the thermodynamic limit we can assume
that the variable $\kb$ is continuous. Thus the perturbed
quasiparticle can be described by the space $\mathbb{C}\oplus L^2(\R^3/\Z_2)$
with the Hamiltonian
\begin{align}
\label{def:themodel0}
 H_{\rm{Fried}}(\kb)&:=\begin{bmatrix} 
 	e_\kb & (h_\kb|\\
 	|h_\kb) & e_\pb+e_{\kb-\pb}
      \end{bmatrix},\end{align}
    where $h_\kb$ can be derived from \eqref{triple0}.
   Here, the  action of $\Z_2$ on $\R^3$ is
  $\pb\mapsto\kb-\pb$, and is related to the Bose symmetry
  $b_\pb^*b_{\kb-\pb}^*\Omega_\Bog= b_{\kb-\pb}^*b_\pb^*\Omega_\Bog$.

Hamiltonians similar to \eqref{def:themodel0}
are well understood. They are often used as
toy models in quantum physics, and are sometimes  called {\em
  Friedrichs Hamiltonians}.

 It is important to notice that  if we set $ h_\kb=0$, so that
  the off-diagonal terms in \eqref{def:themodel0} disappear,
the unperturbed quasiparticle energy
$e_\kb$ lies inside the continuous spectrum of 2-quasiparticle
excitations
$\{e_\pb+e_{\kb-\pb}\ |\ \pb\in\mathbb{R}^3\}$, at least for small momenta. (To be able to
say this we need the
thermodynamic limit which makes the momentum  continuous.) 
To see this, note that if $\kb\mapsto e_\kb$ is convex
we have a particularly simple expression (cf. Lemma \ref{lem:convex_shape})
for the infimum of the 2-quasiparticle spectrum:
 \be \label{def:infconv.}
 \inf_{\pb}\{e_\pb+e_{\kb-\pb}\}= 2 e_{\kb/2}.
 \ee
 Now  \eqref{coeff:bogdisp2}
 is strictly convex, hence $e_\kb$
 lies inside the continuous spectrum of 2-quasiparticle
excitations.  If the second derivative of $\hat v$ near $0$
  is small enough, then the generic dispersion relation 
\eqref{coeff:bogdisp1} is convex for small momenta, hence
then this  property is true at least for small momenta.

Because of that, one can expect that
the
position of the singularity of the resolvent
of \eqref{def:themodel0} 
becomes complex---it describes a resonance and not a bound state. This is interpreted as the unstability of the quasiparticle:
its decay rate is twice the  imaginary part of the 
resonance.

The second order perturbation theory, often called
the {\em Fermi Golden Rule},
says that in order to compute the
(complex) energy shift of an eigenvalue we need  to find the so-called
self-energy $\Sigma_\kb(z)$, which for  $z\not\in\R$ in our case is given by
the integral
    \begin{align}
       \Sigma_\kb(z)&=\frac{1}{2(2\pi)^3}\int\frac{h_\kb^2(\pb)\d\pb}{(z-e_\pb-e_{\kb-\pb})}. \label{droppo4.} 
\end{align}
Then $\Sigma_\kb(e_\kb+\ri0)$ should give the energy shift of
the eigenvalue $e_\kb$.

The imaginary part of this shift is much easier to compute. In fact,
 let $\mathcal{P}\frac1x$ denote the principal value of $\frac1x$.
Applying the Sochocki-Plemelj formula 
\begin{equation}
\frac1{x+\ri0}=\mathcal{P}\frac{1}x-\ri\pi\delta(x) , \label{eq:sochocki}
\end{equation}
we obtain
\begin{align}
  \mathrm{Im}
  \Sigma_\kb(e_\kb+\ri0)&=\frac{-1}{16 \pi^2}\int h_{\kb}^2(\pb)\delta(e_\kb-e_\pb-e_{\kb-\pb})\d\pb. \label{droppo4..} 
\end{align}
%

  In the main result of our paper we make an assumption which
  is a compromise between the usual regular case and a contact
  potential.
  We assume that $\hat v$, the Fourier transform
of the potential in a neighborhood of zero is 
constant, however it decays for large $\kb$ sufficiently fast.
 In Theorem \ref {thm:damping} we prove that
under these assumptions
\be\label{damping1..}
 \mathrm{Im}  \Sigma_{\kb}
  (e_k+\ri0)=-c_\mathrm{Bel}k^5+
O(k^6)  \qquad \text{as} \qquad k\to 0,\qquad c_\mathrm{Bel}=\frac{3\hat{v}(0)}{640\pi \mu}.
  \ee
Physically \eqref{damping1..} means that quasiparticles are almost stable for small $k$ with the lifetime
proportional to $k^{-5}$.


We remark that our analysis is based on the grand-canonical approach
where $\mu$ is the chemical potential.
 In the canonical picture the dispersion relation in the thermodynamic limit is conjectured to be 
\be e_{\kb}=\sqrt{\frac14|\kb|^4 +4\pi a \rho |\kb|^2}.\label{canonical}\ee
Comparing \eqref{coeff:bogdisp2}  with \eqref{canonical} we obtain   $4\pi a \rho\approx
\mu$. Actually, at positive temperatures $\rho$ should be replaced by 
 the condensate density $\rho_0$. It is well-known that for weak
 potentials $\hat v(0)\approx 4\pi a$. Thus \eqref{damping1..} can be
 rewritten as
\be \label{eq:canonical_cbel}
c_\mathrm{Bel}=\frac{3}{640\pi\rho_0},\ee
which is the form of the Beliaev constant usually stated in the
physics literature (\cite{ShiGri-98,Giorgini-98,Liu-97,Chung-09}).  In
particular, to the leading order the damping rate depends on the potential only through $\rho_0$.

The Fermi Golden Rule predicts that  the real part of the dispersion
relation of the interacting system is approximately given by
$e_\kb+\mathrm{Re}\Sigma_\kb(e_\kb+\ri0)$, where
    \begin{align}
       \mathrm{Re}\Sigma_\kb(e_\kb+\ri0)&=\frac{1}{2(2\pi)^3}\mathrm{Re}\int\frac{h_{\bf k}(\pb)^2 \d\pb}{(e_\kb-e_\pb-e_{\kb-\pb}+\ri0)}. \label{droppo4-} 
\end{align}
If $\hat v(\kb)$ has a sufficient decay and we use the
dispersion relation \eqref{coeff:bogdisp1}, then
\eqref{droppo4-}  is well defined. However if we use the formula
\eqref{coeff:bogdisp2}  for
contact potentials, then \eqref{droppo4-}  is divergent for large
$\kb$. This is related to the fact that constant $\hat v(\kb)$ does
not correspond to a well-defined potential (one  has to renormalize
its coupling constant).

Unfortunately,
\eqref{droppo4-}  yields an unphysical prediction for
small momenta. Under the same assumptions as in the
main theorem, we show that
$\lim\limits_{\kb\to0}\Sigma_\kb(0)=\infty$. Thus the Fermi Golden Rule
predicts an infinite energy shift at zero momenta, which is certainly incorrect. This is in agreement with second-order perturbation theory results from physics literature \cite{ShiGri-98}.


 Similar results about both imaginary and real part of the
  shift of the dispersion relation can be obtained for more
general potentials. We indicate possible generalizations of our result
in remarks.

One can conclude that  perturbation theory around the Bogoliubov Hamiltonian
 provides  a reasonable method to find the second order imaginary
 correction to the dispersion relation.  However, the  computation of
 its real part seems more dubious, at least
 for small momenta.

The above problem is an indication of the crudeness of the
Bogoliubov approximation.  Throwing out the zero mode from the picture
(or, which is essentially the same, treating it as a classical
quantity), as well as throwing out higher order terms, is a very
violent act and we should not be surprised by a punishment. By the
way, one expects that the true dispersion relation of phonons goes to
zero as $\kb\to0$. This is the content of the so called
``Hugenholtz-Pines Theorem'' \cite{HugPin-59}, which is  a
(non-rigorous) argument  based on the gauge invariance. Perturbation
theory around the Bogoliubov Hamiltonian is compatible with this
theorem where it comes to the imaginary part. For the real part it
fails.

Better results of computations of the imaginary part over the real
part
based on the Fermi Golden Rule are not very surprising. 
It is a general property of
 Friedrichs Hamiltonians with singular off-diagonal terms: the
 imaginary part of the perturbed eigenvalue can be computed much more
 reliably than its real part. We describe this phenomenon briefly in Sections
 \ref{Friedrichs    Hamiltonian} and \ref{Fermi Golden Rule}.

Readers who like clean mathematical results illustrating physical
phenomena (which includes the authors)
may be somewhat dissatisfied with a relatively long chain of
arguments  presented in this paper. One of its aspects is the use of a finite
system in a box at some of the steps (e.g. Bogoliubov approximation
and removal of the zeroth mode), and of the thermodynamical limit in
others (computation of the resonance, which requires continuous
spectrum, hence, infinite volume). Unfortunately, we do not know 
a better description. We are just trying to follow the
usual physicist's reasoning, without hiding its non-rigorous steps.

Let us now make a few remarks about the literature. The theory of metastable states and their exponential decay goes back
to the work of Dirac \cite{dirac}.
The concept of a resonance as a pole on the ``unphysical sheet of the
complex plane'' is usually attributed to Wigner-Weisskopf
\cite{wigner-weisskopf}. It is discussed, including its historical background,
in Chap. XII.6 of \cite{reed-simon-4}.
In his  lecture notes
\cite{fermi} Fermi
formulated
 two ``golden rules'' that describe 2nd order theory for
 eigenvalues and their decay rate.
 The
Friedrichs Hamiltonian is a useful pedagogical toy model, which nicely  ilustrates
the Fermi Golden Rule. It  goes back to \cite{Friedrichs-65}, see also
\cite{DerFru-02}.
An elegant
rigorous description of exponential decay expressing the Fermi Golden
Rule
was given by Davies in his theorem about the weak coupling limit
\cite{Davies-74}, see also \cite{DerdeR-07,DerdeR-08}.

The original paper
of Bogoliubov \cite{Bogoliubov-47} was heuristic, however in recent years there
have been many rigorous papers justifying Bogoliubov's approximation
in several cases.  The first result justifying  \eqref{eq:Bog_approx} has been obtained in the mean-field scaling by Seiringer in \cite{Seiringer-11} (see also \cite{LewNamSerSol-15,DerNap-13,GreSei-13,NamSei-15} for related results). Recently, corresponding results have been obtained in the Gross-Pitaevskii regime \cite{BocBreCenSch-19,BreSchSch-22,NamTri-23} and even beyond \cite{BreCapSch-22}.  A time-dependent version of Bogoliubov theory has been successful in describing the dynamics of Bose-Einstein condensates and excitations thereof (see \cite{NamNap-17b,Napiorkowski-23} for reviews).     

As explained above, to describe damping one has to go beyond
Bogoliubov theory.
In the mean-field regime this has been done for the ground state
energy expansion in \cite{NamNap-21,BosPetSei-21},
 including singular interactions \cite{BosLeoPetRad-23}, and for the dynamics
in \cite{BosPetPicSof-22}. Very recently, the results beyond Bogoliubov theory have been obtained in the Gross-Pitaevskii regime
\cite{CarOlgSASch-23}.

 None of the above rigorous papers, with exception of \cite{DerNap-13},
addressed the energy-momentum spectrum. In fact,  it is very difficult to study rigorously the dispersion relation in the 
thermodynamic limit---which is essentially  necessary to analyze
phonon damping.

The quasiparticle  picture of the  Bose gas at low temperatures  has been confirmed
in experiments. The dispersion relation of  ${}^4\mathrm{He}$ can
be observed in neutron scattering experiments, and is remarkably
sharp. It has
been measured within a large range of wave numbers covering not only
phonons, but also the so-called maxons and rotons, see e.g.
\cite{Godfrinetall-21}.
In particular, one can see that the dispersion relation is slightly
higher than the 2-quasiparticle spectrum for low wave numbers.
The quasiparticle picture has  also been confirmed by
experiments on Bose Einstein condensates involving alkali atoms.
The Beliaev damping has been  observed in experiments on Bose Einstein
condensates. The results are consistent with theoretical
 predictions  \cite{Katzetall-02,Hodbyetall-01}. Note, however, that
 the precise prediction \eqref{droppo4..} is difficult to verify
 experimentally. Bose-Einstein condensates created in labs are not
 very large, so it is difficult to probe the large wavelength region.

 Let us mention that there exists another  related phenomenon found in
 Bose-Einstein condensates, the so-called  Landau damping, which
 involves instability of quasiparticles due to thermal
 excitations. The Landau damping is absent at zero temperature and
 becomes dominant at higher temperatures. 
The Beliaev damping occurs at zero temperature, and for very small
temperatures it is still stronger than the Landau damping.

In the physics literature, the damping of phonons was first computed
by Beliaev \cite{Beliaev-58}. Landau damping has been for the first time computed by Hohenberg and Martin in \cite{HohMar-65} (see also \cite{MohMor-60}).  Both these results have been reproduced in \cite{ShiGri-98}, also using the formalism of Feynman diagrams and many-body Green's functions. In \cite{Liu-97} the damping rate was derived starting from an effective action in the spirit of Popov’s hydrodynamical approach. \cite{Giorgini-98} repeated the same computation in the time-dependent mean-field approach. In \cite{Chung-09} the mean-field and hydrodynamic approaches were  applied to the 2D case. Our derivation is consistent with the above works, however, in our opinion, avoids some unnecessary elements obscuring the simple mechanism of the Beliaev damping.

The plan of the paper is as follows. Sections \ref{Friedrichs 
  Hamiltonian} and \ref{Fermi Golden Rule} concern general well-known
facts about   about 2nd order perturbation theory of
  embedded eigenvalues. In Section  \ref{sec:derivation} we define the Bose gas Hamiltonian and describe
the Bogoliubov approach in the grand-canonical setting.
 In Section \ref{Effective Friedrichs Hamiltonian} we derive heuristically the
 effective model that we consider. Then, in Section
 \ref{sec:cubicirrelevant} we discuss the shape of the energy-momentum
 spectrum and explain why the contribution from term
 \eqref{triple} is irrelevant for the damping rate computation, which is
 the main result of the paper is 
proven  in Section \ref{sec:dampingrate} as
Theorem \ref{thm:damping}. The analysis of
 the real part of the
 self-energy, and of its (unphysical) behavior at small momenta by the method of this paper is described in Section
 \ref{sec:renorm}.

\section{Friedrichs   Hamiltonian}\label{Friedrichs  Hamiltonian}
Suppose that $\cH$ is a Hilbert space with a self-adjoint operator
$H$. Let $\Psi\in\cH$ be a normalized vector. We can write
$\cH\simeq\C\oplus\cK$, where $\C\simeq\C\Psi$ and $\cK:=\{\Psi\}^\perp$.
 First assume
that $\Psi$ belongs
to the domain of $H$ and set 
\be 
E_0:=(\Psi|H\Psi),\quad h:= H\Psi-E_0\Psi
.\label{braket}\ee
 Note that $h\in\cK$.
  Let $K$ denote $H$ compressed to
$\cK$.  That means, if $I:\cK\to\cH$ is the embedding, then $K:=I^*H I$.
Then in terms of $\C\oplus\cK$ we can write
\be
H=\begin{bmatrix}E_0&(h|\\|h)&K\end{bmatrix}.
\label{fried}\ee

Operators of this form were studied by Friedrichs in \cite{Friedrichs-65}. Therefore,
sometimes they
 are referred to as {\em Friedrichs Hamiltonians},  e.g. in \cite{DerFru-02,DerdeR-07}:

Let $z\in\C$. The following identity is a special case of the so-called {\em
  Feshbach-Schur formula}:
\begin{align}\label{feshbach}
(\Psi|(H-z)^{-1}\Psi)&=\frac{1}{E_0+\Sigma(z)
                                -z},\\
  \Sigma(z)&:=-(h|(K-z)^{-1}h).\label{feshbach1a}\end{align}
Following a part of the physics literature, we will call $\Sigma(z)$
the {\em self-energy}.
For further reference let us rewrite \eqref{feshbach}
as
\be\label{fesh1}
\Sigma(z)=\frac1{(\Psi|(H-z)^{-1}\Psi)}+z-E_0.\ee
 Note that  the full resolvent of $H$ can be computed, see
  e.g. \cite{DerFru-02}, or  Equation (1.2) of \cite{DeGe}:
\begin{align} \label{eq:FriedRes}
(H-z)^{-1}=&\begin{bmatrix}0&0\\0&(K-z)^{-1}\end{bmatrix}\\+&
\begin{bmatrix}1\\(K-z)^{-1}|h)\end{bmatrix}
\frac1{E_0+\Sigma(z)
                                -z}\begin{bmatrix}1&(
                                  h|(K-z)^{-1}
                                \end{bmatrix}.\notag\end{align}

 If $K$ has continuous spectrum, it often happens that $\Sigma(z)$
 can be continued analytically from 
the upper complex halfplane across the spectrum to the {\em non-physical sheet of the 
  complex plane}. Then $(\Psi|(H-z)^{-1}\Psi)$ may have a singularity 
for $z=E=E_\mathrm{R}-\ri \frac\Gamma2$ with $\Gamma>0$. This
singularity $E$ is called a {\em resonance}. Suppose that
$\Gamma$ is small. A well-known 
non-rigorous argument, involving  a change of the contour of 
integration and described e.g. in Chap. XII.6 of \cite{reed-simon-4}  (see also \cite{FetWal-03}) , shows that over a long period of time (not too small and 
not too large) we have 
\be\label{decay}
(\Psi|\e^{-\ri t H}\Psi)\simeq C\e^{-\rm{i} E_\mathrm{R} t-\frac{\Gamma}2t}.\ee 
This is interpreted as  exponential decay of the state $\Psi$ with 
the decay rate $\Gamma$.

We can apply the formulas \eqref{feshbach}- \eqref{eq:FriedRes}
also if $\Psi$ does not belong to the
domain of $H$, but belongs to its form
domain, so that $( \Psi|H\Psi)$ is well
defined. Note that $E_0$ and 
$\Sigma(z)$  are then uniquely defined by 
\eqref{braket} and \eqref{fesh1}.

If $\Psi$ does not belong to the form domain of $H$, then strictly
speaking the self-energy is ill defined. In practice in such
situations one often introduces  a cutoff Hamiltonian $H^\Lambda$, which in
some sense approximates $H$.
Then,
setting
$h^\Lambda:= H^\Lambda\Psi$,
$E_0^\Lambda:=(\Psi|H^\Lambda\Psi)$, and denoting by
$K^{\Lambda}$ the operator $H^\Lambda$ compressed to
$\cK$,  one can use the cutoff version of the Feshbach-Schur formula:
\begin{align}\label{feshbach-r}
(\Psi|(H^\Lambda-z)^{-1}\Psi)&=\frac{1}{E_0^\Lambda+\Sigma^\Lambda(z)
                                -z},\\
  \Sigma^\Lambda(z)&=-(h^\Lambda|(K^{\Lambda}-z)^{-1}h^\Lambda).\end{align}
The resolvent of the original Hamiltonian $H$ can be retrieved \cite{DerFru-02} 
in the limit $\Lambda\to\infty$:
\be
  (H-z)^{-1}=\lim_{\Lambda\to\infty}  (H^\Lambda-z)^{-1}.\ee
Note that $E_0^\Lambda$ is a  sequence of real numbers, typically
converging to $\infty$.  They  can be treated as
{\em counterterms} renormalizing the self-energy $\Sigma^\Lambda(z)$.

\section{Fermi Golden Rule}
\label{Fermi Golden Rule}

The meaning of the self-energy is especially clear
in perturbation theory. Again, let $\Psi$ be a normalized vector in
$\cH$. Consider a family of  self-adjoint operators $H_\lambda=H_0+\lambda V$ such that
$H_0\Psi=E_0\Psi$. In order to avoid discussing 1st order
  perturbation theory we assume that  $( \Psi|V\Psi)=0$.
Let  $h:=V\Psi-\frac{1}{\lambda}E_0\Psi$ and $K_\lambda$ be
$H_\lambda$ compressed to $\cK$. 
Thus we  rewrite \eqref{fried} as
\be
H_\lambda=\begin{bmatrix}E_0&\lambda(h|\\\lambda|h)&K_\lambda \end{bmatrix}.
\label{fried1}\ee
 We extract $\lambda^2$ from the definition of the self-energy,
  so that \eqref{feshbach} and \eqref{feshbach1a} are rewritten
as
\begin{align}\label{feshbach1}
(\Psi|(H_\lambda-z)^{-1}\Psi)&=\big(E_0+\lambda^2\Sigma_\lambda(z)
                                        -z\big)^{-1},\\
  \Sigma_\lambda(z)&:=-(h|(K_\lambda-z)^{-1}h)
                     =\Sigma_0(z)+O(\lambda)
                     .\end{align}
                 Now                   \eqref{feshbach1} has a pole
                                      at
                                      \be
                                      E_0+\lambda^2\Sigma_0(E_0+
                                      \ri0)+O(\lambda^3).\label{fgr}\ee
                                      This is often formulated as the {\em Fermi
                                        Golden Rule}:
                                   the pole of the resolvent,
                                   originally at     an eigenvalue $E_0$, is shifted in
the                                      second order by $\lambda^2\Sigma_0(E_0+
                                      \ri0)$.
                                      This
                                      shift  can have a negative
                                      imaginary part, and then
                                      the eigenvalue disappears, and
                                      instead we have a resonance.

                                      For small couplings $\lambda$ a rigorous meaning of
the decay property \eqref{decay}
 is provided by the               following version of the {\em weak coupling
                                      limit} (\cite{Davies-74}, see also \cite{DerdeR-07,DerdeR-08})
                                  \be
                                      \lim_{\lambda\to0}
                                      \big(\Psi\big|
                                      \exp\big(-\ri\tfrac{t}{\lambda^2}(H_\lambda-E_0)\big)\Psi\big)=
                                      \e^{-\ri t\Sigma_0(E_0+i0)}.
                                      \ee

If the perturbation is
                                      singular, so that $\Psi$ does
                                      not belong to the domain of
                                      $V$,
then $\Sigma_0(z)$ is in general ill defined and \eqref{fgr}  may
  lose its
meaning. Strictly speaking, one then needs to introduce a cutoff on
the perturbation and
a counterterm, and only then to apply the appropriately
modified Fermi Golden Rule.

Note that it is enough to consider real counterterms.
Therefore, if we know that the renormalized energy is 
close to $E_0$, then we can still expect that \eqref{fgr}  gives a
correct prediction for the imaginary part of the resonance.
In other words, the
imaginary part of the
singularity of the resolvent $(H_\lambda-z)^{-1}$ is
\be \lambda^2\mathrm{Im}\Sigma_0(E_0+\ri0)+O(\lambda^3),\ee
where we do not need to cut off the perturbation.

 In practice, we start from a singular expression of the form
\eqref{fried1}. To make it well-defined we need to choose a cutoff
and counterterms. These choices will not affect the imaginary part of
the resonance, however in principle, one can add an arbitrary real
constant to a counterterm, which will affect the real part of the resonance.
Therefore, for singular perturbations it may be more difficult to predict the real part of the resonance.

\section{Bose gas and Bogoliubov ansatz}
\label{sec:derivation}

We consider a homogeneous Bose gas of \(N\) particles with a two-body
potential  described by a function \(  v: \R^3\to \R\) with the Fourier transform  \( \hat{v}(  \kb)=\int_{\R^3} v(x)
\e^{-\ri\kb \cdot \xb } \d\xb \). 
In the grand canonical setting and the momentum representation
such a system is governed by the
(second quantized) Hamiltonian
\be  \label{hamiltonian:grandcanonical-}
H= \int \left( \frac{\kb^2}{2}-\mu \right) a_\kb^*a_\kb\d \kb
+\frac{\kappa}{2(2\pi)^3}\int\d\pb\int\d\qb\int\d\kb\hat{v}(\kb)a_{\pb-\kb}^*a_{\qb+\kb}^*a_{\pb}a_{\qb},
\ee
where \(\mu\ge 0\) is the chemical potential and \(
a_\kb^*/a_\kb\) the creation/annihilation operators for particles of
mode \(\kb\). It acts on the bosonic Fock
space \(\cF= \Gamma_\mathrm{s}\big(L^2(\R^3)\big)\),
and for each $N$ it leaves invariant its $N$-particle sector $L^2_\mathrm{s}\big((\R^3)^N\big)$.
Recall that the creation and annihilation operators satisfy the canonical commutation relation (CCR): 
\be \label{eqn:CCR}
[a_\pb,a_\qb]=0 = [a_\pb^*,a_\qb^*], \ \ [a_\pb,a^*_\qb]=\delta(\pb-\qb), 
\ee 
where \( [\ ,\ ]\) is the usual commutator.
 We introduce the coupling constant $\kappa>0$ mostly for
  bookkeeping purposes; note that in the introduction we set $\kappa=1$.

    In most of the paper we will make the following assumption on the potentials:\\
\begin{subequations}      
\begin{align}v\in L^1(\R^3),&\quad\text{ so that $\hat v$ is a
                             continuous function;}\\
  \hat v(\kb)&\geq0,\quad \kb\in\R^3;\\
     \hat v(\kb)=\hat v(0)>0,&\quad\text{ for $|\kb|<\Lambda$,\quad $\Lambda>0$;}\label{assu-flat}\\
     |\hat v(\kb)|\leq C&(1+|\kb|)^{-\frac12-\epsilon},\qquad \text{for some $\epsilon>0$;} 
\label{assu2}
     \\
v&\text{ is rotationally invariant}&.\end{align}
\label{assum}
\end{subequations}

\begin{remark}
One can relax the condition \eqref{assu-flat} to allow for generic
potentials. One could also consider potentials which for some constant
$\nu$ satisfy
\be\label{assu-generic}
\hat v(\kb)=\hat
v(0)+\frac{\nu}{2} |\kb|^2+O(|\kb|^{2+\epsilon}),\qquad\epsilon>0.\ee
We will comment about possible extensions of our results to potentials
satisfying
\eqref{assu-generic} instead of \eqref{assu-flat}.
  \end{remark}

For the reasons explained in the introduction, we replace the infinite
space $\R^3$ by the torus $[-L/2,L/2]^3$ with periodic boundary
conditions. In the momentum representation the Hamiltonian becomes
\be  \label{hamiltonian:grandcanonical}
H= \sum_{\kb\in 2\pi \Z^3/L} \left( \frac{\kb^2}{2}-\mu \right) a_\kb^*a_\kb+\frac{\kappa}{2L^3}\sum_{\pb,\qb,\kb \in 2\pi \Z^3/L}\hat{v}(\kb)a_{\pb-\kb}^*a_{\qb+\kb}^*a_{\pb}a_{\qb}.
\ee
Note that $\hat v$ is the same function as in 
\eqref{hamiltonian:grandcanonical-}, however it is now sampled only on
the lattice
$2\pi \Z^3/L$.
The commutation relations involve now the Kronecker delta:
\be \label{eqn:CCR_disc}
[a_\pb,a_\qb]=0= [a_\pb^*,a_\qb^*], \ \ [a_\pb,a^*_\qb]=\delta_{\pb,\qb}. 
\ee

Let us now pass to  the quasiparticle representation. To
this end we follow the well-known grand-canonical version of the
Bogoliubov approach (see e.g. \cite{CorDerZin-09}). It involves two unitary transformations. 

The first one is a Weyl transformation that introduces a
macroscopic occupation of the zero-momentum mode, the Bose-Einstein
condensate. (In the canonical version  Bogoliubov approach this
corresponds to
the c-number substitution \cite{LieSeiYng-05}.) To this end, for \(\alpha\in \C\), we introduce the Weyl operator of the mode \(\kb=0\)
\be \label{def:weylop}
W_\alpha=\exp(-\alpha a_0^*+\bar{\alpha}a_0).
\ee
Then
\be \nn
W_\alpha^* a^*_\kb W_\alpha = a^*_\kb-\bar{\alpha} \delta_{\kb,0}=: \tilde{a}^*_{\kb}.
\ee
The new annihilation operators with tildes kill the “new vacuum”
$\Omega_\alpha =W^*_\alpha \Omega$. We express our
Hamiltonian in terms of $\tilde a_\kb^*,\tilde a_\kb$. To simplify the notation, in what follows we drop the tildes and we obtain
\begin{eqnarray*}
 H  &=& -\mu |\alpha|^2+ \frac{  \kappa\hat{v}(0)}{2L^3}|\alpha|^4 +\left(\frac{\kappa\hat{v}(0)}{L^3}|\alpha|^2-\mu  \right)(\alpha a_0^*+ \bar{\alpha} a_0) \\
&+& \sum_{\kb} \left( \frac{\kb^2}{2}-\mu +   \frac{\kappa (\hat{v}(\kb)+\hat{v}(0))}{L^3}|\alpha|^2 \right) a_\kb^* a_\kb + \sum_{\kb }  \frac{\kappa\hat{v}(\kb)}{2L^3} \left(\alpha^2  a_\kb^*a_{-\kb}^*+ 
\bar{\alpha}^2  a_\kb a_{-\kb} \right)\\
&+& \frac{\kappa}{L^3}\sum_{\kb_1, \kb_2}\hat{v}(\kb_1)\left( \bar{\alpha} a_{\kb_1+\kb_2}^*a_{\kb_1}a_{\kb_2}+\alpha  a_{\kb_1}^*a_{\kb_2}^*a_{\kb_1+\kb_2}\right) \\
&+&  \frac{\kappa}{2L^3}\sum_{\kb_1 ,\kb_2 ,\kb_3, \kb_4  }\delta(\kb_1+\kb_2-\kb_3-\kb_4)\hat{v}(\kb_2-\kb_3)a_{\kb_1}^*a_{\kb_2}^* a_{\kb_3}a_{\kb_4}.
\end{eqnarray*}
Note that we have
\begin{equation*}
    ( \Omega_\alpha|H\Omega_\alpha)=-\mu |\alpha|^2+ \frac{ \kappa \hat{v}(0)}{2L^3}|\alpha|^4,
\end{equation*}
and we choose $\alpha =\sqrt{\frac{\mu L^3}{\kappa\hat{v}(0)}}$, so that
$\Omega_\alpha$  minimizes this expectation value. This leads to 
\begin{align} \label{eq:WHWalpha}
  H  &=\kappa^{-1}H_0+H_2+\sqrt\kappa H_3+\kappa H_4,\\\notag
H_0&:=
-\frac{\mu^2L^3}{2 \hat{v}(0)},\\\notag
H_2&:= \sum_{\kb} \left( \frac{\kb^2}{2}  +   \frac{\mu \hat{v}(\kb)}{\hat{v}(0)} \right) a_\kb^* a_\kb + \sum_{\kb }  \frac{\mu \hat{v}(\kb)}{2\hat{v}(0)} \left(   a_\kb^*a_{-\kb}^*+ 
   a_\kb a_{-\kb} \right),\\\notag
  H_3&:= \frac{1}{L^{3/2}
       }\sum_{\kb_1,
  \kb_2}\frac{\hat{v}(\kb_1)\sqrt{\mu}}{\sqrt{\hat{v}(0)}} \left(
  a_{\kb_1+\kb_2}^*a_{\kb_1}a_{\kb_2}+
  a_{\kb_1}^*a_{\kb_2}^*a_{\kb_1+\kb_2}\right), \\\notag
H_4&:=\frac{1 }{2L^3}\sum_{\kb_1 ,\kb_2 ,\kb_3, \kb_4  }\delta(\kb_1+\kb_2-\kb_3-\kb_4)\hat{v}(\kb_2-\kb_3)a_{\kb_1}^*a_{\kb_2}^* a_{\kb_3}a_{\kb_4}. \notag
\end{align}

We extract from the above Hamiltonian all terms containing only
non-zero modes:
\begin{align}\notag
  H_2&=\frac{\mu}{2}(a_0^{*2}+a_0^2+2a_0^*a_0)+H_2^{\rm exc},\\
  H_2^{\rm exc}&:=\sum_{\kb\neq0} \left( \frac{\kb^2}{2}  +   \frac{\mu \hat{v}(\kb)}{\hat{v}(0)} \right) a_\kb^* a_\kb + \sum_{\kb\neq0 }  \frac{\mu \hat{v}(\kb)}{2\hat{v}(0)} \left(   a_\kb^*a_{-\kb}^*+ 
                 a_\kb a_{-\kb} \right);\\\notag
  H_3&=\frac{1}{L^{3/2}}\sum_{\kb}\sqrt{\mu\hat
       v(0)}(a_0^*a_\kb^*a_\kb+a_\kb^*a_\kb a_0)\\\notag&+\frac{1}{L^{3/2}}\sum_{\kb\neq0}\frac{\sqrt{\mu}\hat
       v(\kb)}{\sqrt{\hat
       v(0)}}\big((a_0^*+a_0)a_\kb^*a_\kb+a_0a_\kb^*a_{-\kb}^*+a_0^*a_\kb
       a_{-\kb}\big)+H_3^{\rm exc},
\\\label{exc3}       H_3^{\rm exc}&:= \frac{1}{L^{3/2}}\sum_{\kb_1,
  \kb_2,\kb_1+\kb_2\neq0}\frac{\hat{v}(\kb_1)\sqrt{\mu}}{\sqrt{\hat{v}(0)}} \left(
  a_{\kb_1+\kb_2}^*a_{\kb_1}a_{\kb_2}+
                              a_{\kb_1}^*a_{\kb_2}^*a_{\kb_1+\kb_2}\right) ;\\\notag
  H_4&=\frac{1}{2L^3}\hat v(0)\Big(a_0^*a_0^*a_0a_0+2\sum_{\kb\neq0}a_0^*a_0a_\kb^*a_\kb\Big)\\\notag
&+       \frac{1}{2L^3}\sum_{\kb\neq0}\hat v(\kb)(a_0^*a_0^*a_\kb
       a_{-\kb}+a_0a_0a_\kb^*a_{-\kb}^*+2a_0^*a_ 0a_\kb^*a_\kb)\\\notag&+
                                              \frac{1}{L^3}\sum_{\kb_1,\kb_2,\kb_1+\kb_2\neq0}\hat
                                              v(k_1)\big(a_0^*a_{\kb_1+\kb_2}^*a_{\kb_1}a_{\kb_2}+
                                              a_0a_{\kb_1}^*a_{\kb_2}^*a_{\kb_1+\kb_2}\big)+H_4^{\rm
                                                                         exc},\\\label{exc4}
  H_4^{\rm exc}&:=\frac{1 }{2L^3}\sum_{\kb_1 ,\kb_2 ,\kb_3, \kb_4 \neq0 }\delta(\kb_1+\kb_2-\kb_3-\kb_4)\hat{v}(\kb_2-\kb_3)a_{\kb_1}^*a_{\kb_2}^* a_{\kb_3}a_{\kb_4}. 
\end{align}

Let
  \begin{align}\nn
      \sigma_\kb&=\frac{\sqrt{\sqrt{e_\kb^2+B_\kb^2}+e_\kb}}{\sqrt{2e_\kb}},\quad 
    \gamma_\kb=\frac{\sqrt{\sqrt{e_\kb^2+B_\kb^2}-e_\kb}}{\sqrt{2e_\kb}},
    \\\nn
\beta_\kb&=\cosh^{-1}(\sigma_{\kb}) =\sinh^{-1}(\gamma_{\kb}),
    \\
              e_\kb&                   := \sqrt{\frac{1}{4}|\kb|^4+
       B_\kb|\kb|^2}
                     ,\ \ \  B_\kb:= \frac{\hat{v}(\kb)}{\hat{v}(0)}\mu.\end{align}
                   Sometimes we will write $e_k$, $\sigma_k$,
                   $\gamma_k$, instead of
                    $e_\kb$, $\sigma_\kb$,
                   $\gamma_\kb$.
We are going to apply a Bogoliubov transformation 
\be \label{def: bogtranf}
U_\Bog:=\exp \Bigg(\sum_{\kb \neq 0} \beta_\kb(a_\kb^*a_{-\kb}^*-a_\kb a_{-\kb})\Bigg),
\ee
which transforms non-zero mode operators $a_\kb^*,a_\kb$ into
 quasi-particle operators \( b_\kb^*, b_\kb\): \begin{eqnarray}\nn
b_\kb &:=& U_\Bog a_\kb U_\Bog^* = \sigma_\kb a_\kb+ \gamma_\kb a_{-\kb}^*,\\
b_\kb^* &:=& U_\Bog a_\kb^* U_\Bog^* = \sigma_\kb a_\kb^*+ \gamma_\kb a_{-\kb},
\label{bogg}\end{eqnarray}
Let us also note the relation inverse to \eqref{bogg}:
\begin{eqnarray*}
a_\kb &=& \sigma_\kb b_\kb- \gamma_\kb b_{-\kb}^*,\\
a_\kb^* &=& \sigma_\kb b_\kb^*- \gamma_\kb b_{-\kb}. 
\end{eqnarray*}
It is well known that \eqref{bogg} diagonalizes $H_2^{\rm exc}$ in terms of the quasi-particle operators:
\begin{eqnarray} \label{def:excham}
H_2^{{\rm exc}} &=&  E_{\rm Bog}+H_\Bog,
\end{eqnarray} 
where 
   \begin{align}
  E_{\rm Bog}&:= -\frac{1}{2}\sum_{\kb\neq 0}\left(\frac{1}{2}|\kb|^2+\frac{\hat{v}(\kb)}{\hat{v}(0)}\mu-e_\kb \right) ,\\
  \qquad H_\Bog&:= \sum_{\kb\neq 0} e_\kb b_\kb^* b_\kb. 
  \label{coeff:bogdisp}
\end{align}

We also express  $H_3^{\rm exc}$ in terms of quasiparticles:
\begin{align}
H_3^{\rm exc} &=\frac{1}{L^{3/2} }\sum_{\kb_1,\kb_2,\kb_1+\kb_2\neq0} \frac{\sqrt{\mu}\hat
                v(\kb_1)}{\sqrt{\hat v(0)}}\\
              &\Big(\big(\sigma_{\kb_1+\kb_2}b_{\kb_1+\kb_2}^*-\gamma_{-\kb_1-\kb_2}b_{-\kb_1-\kb_2}\big) 
                \big(\sigma_{\kb_1}b_{\kb_1}-\gamma_{-\kb_1}b_{-\kb_1}^*\big) 
                \big(\sigma_{\kb_2}b_{\kb_2}-\gamma_{-\kb_2}b_{-\kb_2}^*\big)\nn\\
+&                \big(\sigma_{\kb_1}b_{\kb_1}^*-\gamma_{-\kb_1}b_{-\kb_1}\big) 
  \big(\sigma_{\kb_2}b_{\kb_2}^*-\gamma_{-\kb_2}b_{-\kb_2}\big)
  \big(\sigma_{\kb_1+\kb_2}b_{\kb_1+\kb_2}-\gamma_{-\kb_1-\kb_2}b_{-\kb_1-\kb_2}^*\big) \Big).
\nn               \end{align}
After opening the brackets and using $\sigma_\kb=\sigma_{-\kb}$  and $\gamma_\kb=\gamma_{-\kb}$, we
transform this into  
  \begin{align}
  H_3^{\rm exc}&=H_{3,1}^{\rm exc}+H_{3,2}^{\rm exc}, \label{H3exc}\\
  H_{3,1}^{\rm exc}
    =&
    \sum_{\kb_1,\kb_2,\kb_1+\kb_2\neq0}(b_{\kb_1+\kb_2}^*b_{\kb_1}b_{\kb_2}+
  b_{\kb_1}^*b_{\kb_2}^*  b_{\kb_1+\kb_2})\\
  \Bigg(  \frac{\sqrt{\mu}\hat
  v(\kb_1)}{L^{3/2}\sqrt{\hat v(0)}}
   &\big(\sigma_{\kb_1+\kb_2}\sigma_{\kb_1}\sigma_{\kb_2}-\gamma_{\kb_1+\kb_2}\gamma_{\kb_1}\gamma_{\kb_2}+ \gamma_{\kb_1+\kb_2}\sigma_{\kb_1}\gamma_{\kb_2}-\sigma_{\kb_1+\kb_2}\gamma_{\kb_1}\sigma_{\kb_2} \big)\nn \\
+ & \frac{\sqrt{\mu}\hat
  v(\kb_1+\kb_2)}{L^{3/2}\sqrt{\hat v(0)}} \big(
\gamma_{\kb_1+\kb_2}-\sigma_{\kb_1+\kb_2}\big)\gamma_{\kb_1}\sigma_{\kb_2}\Bigg),\nn\\
    H_{3,2}^{\rm exc}          =&
    \sum_{\kb_1,\kb_2,\kb_1+\kb_2\neq0}  
 (b_{-\kb_1-\kb_2}^*b_{\kb_1}^*b_{\kb_2}^*+
    b_{-\kb_1-\kb_2}b_{\kb_1}b_{\kb_2})\\ &
   \frac{\sqrt{\mu}\hat
  v(\kb_1)}{L^{3/2}\sqrt{\hat v(0)}}  \Big(\gamma_{\kb_1}\gamma_{\kb_2}\sigma_{\kb_1+\kb_2} -
     \sigma_{\kb_1}\sigma_{\kb_2}\gamma_{\kb_1+\kb_2}\big). \nn 
  \end{align}

We could also compute $H_4$, but we will not need it.

\section{Effective Friedrichs Hamiltonian}
\label{Effective Friedrichs Hamiltonian}

Recall that $\Omega_\alpha =W^*_\alpha \Omega$.
Let  $\Omega_\Bog: = U_\Bog^*\Omega_\alpha$ be the quasiparticle
vacuum.  Let $\Span^\cl(K)$ denote the closure of the
  span of the set $K\subset\mathcal{F}$. Introduce the space   $\cF^\exc $ consisting of the Bogoliubov vacuum and 
  quasiparticle excitations, and its $n$-quasiparticle sector:
\begin{align*}
  \cF^\exc :
  =&\Span^\cl\{b_{\kb_1}^*\cdots b_{\kb_n}^*\Omega_\Bog\ |\      \kb_1,\dots,\kb_n\neq0,\quad n=0,1,\dots\},\\
\cF_n^\exc :=&
\Span^\cl\{b_{\kb_1}^*\cdots b_{\kb_n}^*\Omega_\Bog\ |\
\kb_1,\dots,\kb_n\neq0\}.\end{align*}
The most ``violent'' approximation that we are going to make is compressing the Hamiltonian $H$ into the space $\cF^\exc $. We
also drop the uninteresting constant $\kappa^{-1}H_0$ and the (somewhat
more interesting) constant $E_\mathrm{Bog}$. Thus we introduce the
{\em excitation Hamiltonian}
\be
H^\exc :=I^{\exc *}
\big(H-\kappa^{-1}H_0-E_\mathrm{Bog}\big)I^\exc , \nn \ee
where $I^\exc $ denotes the  embedding of $\cF^\exc $ in $\cF$.
Thus $H^\exc $ is an operator on $\cF^\exc $ and
\be
H^\exc =
 H_\Bog+\sqrt\kappa H_3^\exc +\kappa H_4^\exc ,\ee
where $H_3^\exc $  and $ H_4^\exc $ are defined in
\eqref{exc3} and \eqref{exc4}.

\begin{remark}
Let us make some remarks concerning the algebraic meaning of the above
construction. Our  physical space (in finite volume and in the
momentum representation) is the bosonic Fock space over the 1-particle space
$l^2\big(\frac{2\pi}{L}\mathbb{Z}\big)$. By the exponential property of
Fock spaces (see e.g. \cite{DeGe}) we have the following identification:
\begin{equation}
  \Gamma_\mathrm{s}\Big(l^2\big(\tfrac{2\pi}{L}\mathbb{Z}\big)\Big)\simeq
\Gamma_\mathrm{s}(\mathbb{C})\otimes
\Gamma_\mathrm{s}\Big(l^2\big(\tfrac{2\pi}{L}\mathbb{Z}\setminus\{0\}\big)\Big),\label{expon}
\end{equation}
where the first factor describes the ``zeroth mode'' treated as the
``condensate''
and the second
``excitations outside of the condensate''. We will denote by $U$ the
(unitary and canonical) identification described in \eqref{expon}. Note
that creation and annihilation operators of non-zero modes, $a_\kb^*,
a_\kb$, $\kb\neq0$, as well as of quasiparticles $b_\kb^*,b_\kb$ act
only in the second factor.
The translations also act only in the second factor. The coherent vector
$\Omega_\alpha=W_\alpha^*\Omega$ is translation invariant and can be understood as
an element of the first factor. Thus $U$ identifies $\cF^\exc$ with
\begin{equation}
 \Omega_\alpha \otimes
\Gamma_\mathrm{s}\Big(l^2\big(\tfrac{2\pi}{L}\mathbb{Z}\setminus\{0\}\big)\Big),\label{expon1}
\end{equation}
The compressed Hamiltonian $H^\exc$ can be then interpreted as
\begin{align}
  H^\exc=
 (\Omega_\alpha|\  UHU^*
  |\Omega_\alpha),\end{align}
which is an operator on $\Gamma_\mathrm{s}\Big(l^2\big(\tfrac{2\pi}{L}\mathbb{Z}\setminus\{0\}\big)\Big)$.

The idea of decomposing the Fock space as in \eqref{expon}, where the 
first factor describes the ``codensate'', is common in the 
literature. It is e.g.  used in \cite{DerNap-13}
and (implicitly) in the paper by Lewin-Nam-Serfaty-Solovej \cite{LewNamSerSol-15}.

Our compression construction is essentially the most direct
interpretation of the ``replacing the zeroth mode by  a
c-number'', which is a very common procedure in the physics literature.
Physicists expect that this procedure yields physically relevant
results. And so do we,  at least concerning the imaginary part of the
dispersion relation.

However, of course, compression produces an operator which is not unitarily 
equivalent to the initial operator. Therefore, it is certainly a
rather fishy step in our analysis: rigorously it is not clear how much the
analysis of $H^\exc$ will say  about $H$.
\end{remark}

We make two more approximations. 
 We drop 
$\kappa 
H_4$, which is of higher order in $\kappa$ than $\sqrt\kappa H_3$. We
also drop $H_{3,2}$, which involves $3$-quasiparticle
creation/annihilation operators, and does not contribute to the
damping rate (see Section \ref{sec:cubicirrelevant} for a justification). Thus $H^\exc$ is replaced with
\be
\label{exc5} H^\eff:=H_\Bog+\sqrt\kappa 
H_{3,1}^\exc.\ee
 To make our following discussion consistent with Sect. \ref{Fermi
  Golden Rule} about the Fermi Golden Rule,
 we introduce a new coupling constant 
    \be \label{def:couplinglambda}
    \lambda:=\sqrt\kappa. 
    \ee
Let $\kb\neq0$.   Clearly,  $b_\kb^*\Omega_\Bog$ is an eigenstate of
$H^\eff$ for  $\lambda=0$.
We would like to compute
the self-energy for the vector $b_\kb^*\Omega_\Bog$ and the
Hamiltonian  $H^\eff$:
\be \label{droppo1}
\lambda^2\Sigma^\eff_\kb (z):=\frac{-1}{(b_\kb^*\Omega_\Bog|(z-H^{\rm eff})^{-1} b_\kb^*\Omega_\Bog)}
+z-e_\kb.\ee 
 Introduce the  subspaces of
$\mathcal{F}^\mathrm{exc}$  and $\mathcal{F}_n^\mathrm{exc}$ with the total 
momentum $\kb$:
 \begin{align*}
  \mathcal{F}^\mathrm{exc}(\kb) &:=\Span^\cl\{b_{\kb_1}^*\cdots 
  b_{\kb_n}^*\Omega_\Bog,\qquad\kb_1+\cdots\kb_n=\kb, \
   \kb_1,\dots,\kb_n\neq0,\quad n=0,1,\dots\},\\
  \mathcal{F}_n^\mathrm{exc}(\kb) &:=\Span^\cl\{b_{\kb_1}^*\cdots 
  b_{\kb_n}^*\Omega_\Bog,\qquad\kb_1+\cdots\kb_n=\kb, \
   \kb_1,\dots,\kb_n\neq0\}.
   \end{align*}
$b_\kb^*\Omega_\Bog$ is contained in the space
 $\cF^\exc(\kb)$, which is preserved by $H^\eff$.
Let
$H^\eff(\kb)$ denote the operator $H^\eff$ restricted  to
$\cF^\exc (\kb)$.
Thus we can restrict ourselves to  the fiber space
$\cF^\exc(\kb)$ and the fiber Hamiltonian $H^\eff(\kb)$.
In particular, in \eqref{droppo1} we can replace 
 $H^\eff$ with  $H^\eff(\kb)$.

For simplicity, we
will assume that $\frac12\kb\not\in\frac{2\pi}{L}\mathbb{Z}$, so that
at least one  coordinate of $\kb$ is  odd. This guarantees that
$\pb\neq\kb-\pb$.
Let $Z_\kb^L$ denote the set of (unordered) pairs 
$\{\pb,\kb-\pb\}\subset\frac{2\pi}{L}\mathbb{Z}^3\setminus\{0,\kb\}$. Then 
$\cF_2^\exc(\kb)$ can be identified with $l^2(Z_\kb^L)$.

For our analysis it is enough to know only $H^\eff$  (or
$H^\eff(\kb)$) compressed 
  to $\cF_1^\exc(\kb)\oplus\cF_2^{\rm exc}(\kb)$. Note that
the one-quasiparticle state $b_\kb^*|\Omega_\Bog\rangle$
spans $\cF_1^\exc(\kb)$, and
  $\cF_2^{\rm exc}(\kb)$ is spanned by 
$b_\pb^* b_{\kb-\pb}^*\Omega_\Bog$ with   $\{\pb,\kb-\pb\}\in Z_\kb^L.$
We compute:
 \begin{theorem}
\begin{align}\label{eff1}
(b_\kb^*\Omega_\Bog|H^{\rm eff}b_\kb^*\Omega_\Bog)&=e_\kb ,\\ \label{eff2}
(b_\pb^*b_{\kb-\pb}^*\Omega_\Bog|H^{\rm eff}b_\pb^*b_{\kb-\pb}^*\Omega_\Bog)&=e_\pb+e_{\kb-\pb}
                                               ,\\
\label{eqn:overlap_hV}
(b_\pb^*b_{\kb-\pb}^*\Omega_\Bog|H^{\rm eff}
 b_\kb^*\Omega_\Bog)
     &=\frac{ \lambda}{L^{3/2}}h_\kb(\pb),\\
\label{eqn:overlap_hV.}
(b_\kb^*\Omega_\Bog |H^{\rm eff}b_\pb^*b_{\kb-\pb}^*\Omega_\Bog)
     &=\frac{ \lambda}{L^{3/2}}h_\kb(\pb)
\end{align}
with

  \begin{align}
    \label{eqn:h(p)}
h_\kb(\pb)&=  \sqrt{\frac{\mu \hat{v}^2(\kb) 
            }{\hat{v}(0)}}\big(\gamma_\kb-\sigma_\kb\big) 
            \big(\gamma_\pb\sigma_{\kb-\pb}+\sigma_\pb \gamma_{\kb-\pb}\big)\\
    &+ \sqrt{\frac{\mu \hat{v}^2(\pb) 
      }{\hat{v}(0)}}\big(\sigma_\kb\sigma_\pb\sigma_{\kb-\pb}-\gamma_\kb\gamma_\pb\gamma_{\kb-\pb}+\gamma_\kb\sigma_\pb\gamma_{\kb-\pb}
      -\sigma_\kb\gamma_\pb\sigma_{\kb-\pb}\big)\nn\\
          &+\sqrt{\frac{\mu \hat{v}^2(\kb-\pb) 
      }{\hat{v}(0)}}\big(\sigma_\kb\sigma_\pb\sigma_{\kb-\pb}-\gamma_\kb\gamma_\pb\gamma_{\kb-\pb}+\gamma_\kb\gamma_\pb\sigma_{\kb-\pb}
      -\sigma_\kb\sigma_\pb\gamma_{\kb-\pb}\big).\nn
  \end{align}
\end{theorem}

\proof \eqref{eff1} and \eqref{eff2} are straightforward. Let us prove
\eqref{eqn:overlap_hV}. We have
\begin{align}
  &(b_\pb^*b_{\kb-\pb}^*\Omega_\Bog|H^{\rm eff}
 b_\kb^*\Omega_\Bog)
\\=&
(b_\pb^*b_{\kb-\pb}^*\Omega_\Bog|H_{3,1}^\exc
 b_\kb^*\Omega_\Bog).\label{eff3}
\end{align}
Remembering that we have $\pb\neq\kb-\pb$, we see that the only terms
in $H_{3,1}^\exc$ which contribute to \eqref{eff3} are
\begin{align}
  b_{\pb}^*b_{\kb-\pb}^*  b_{\kb}
  \Bigg(  \frac{\sqrt{\mu}\hat
  v(\pb)}{L^{3/2}\sqrt{\hat v(0)}}
   &\big(\sigma_{\kb}\sigma_{\pb}\sigma_{\kb-\pb}-\gamma_{\kb}\gamma_{\pb}\gamma_{\kb-\pb}+ \gamma_{\kb}\sigma_{\pb}\gamma_{\kb-\pb}-\sigma_{\kb}\gamma_{\pb}\sigma_{\kb-\pb} \big)\nn \\
+ & \frac{\sqrt{\mu}\hat
  v(\kb)}{L^{3/2}\sqrt{\hat v(0)}} \big(
    \gamma_{\kb}-\sigma_{\kb}\big)\gamma_{\pb}\sigma_{\kb-\pb}\\
  + 
  \frac{\sqrt{\mu}\hat
  v(\kb-\pb)}{L^{3/2}\sqrt{\hat v(0)}}
   &\big(\sigma_{\kb}\sigma_{\pb}\sigma_{\kb-\pb}-\gamma_{\kb}\gamma_{\pb}\gamma_{\kb-\pb}+ \gamma_{\kb}\gamma_{\pb}\sigma_{\kb-\pb}-\sigma_{\kb}\sigma_{\pb}\gamma_{\kb-\pb} \big)\nn \\
+ & \frac{\sqrt{\mu}\hat
  v(\kb)}{L^{3/2}\sqrt{\hat v(0)}} \big(
    \gamma_{\kb}-\sigma_{\kb}\big)\sigma_{\pb}\gamma_{\kb-\pb}\Bigg)
\end{align}
This yields \eqref{eqn:overlap_hV}. 
\qed

The Hamiltonian $H^\eff$ compressed to
$\cF_1^\exc(\kb)\oplus\cF_2^{\rm exc}(\kb)$ will be 
called  the {\em effective Friedrichs Hamiltonian}
  (for volume $L^3$ and momentum $\kb$). It is
denoted
$  H_{\rm{Fried}}^{L}(\kb)$ and given by 
\begin{align}
\label{def:themodel1}
 H_{\rm{Fried}}^{L}(\kb)&:=\begin{bmatrix} 
 	e_\kb & \frac{\lambda}{L^{3/2}}(h_{\bf k}|\\
 	\frac{\lambda}{L^{3/2}}|h_{\bf k})& e_\pb+e_{\kb-\pb}
      \end{bmatrix},\\
\text{on }\quad  \cF_1^\exc(\kb)\oplus\cF_2^{\rm exc}(\kb)&\simeq
                                                \C\oplus
       l^2(Z_\kb^L),\end{align}
                                               where we explicitly introduced a reference to the volume $L^3$ in 
the notation. 
Thus we end up in a situation described in Section \ref{Fermi Golden
  Rule},  with $b_\kb^*\Omega_\Bog$, resp. $
  l^2(Z_\kb^L)$ corresponding to $\Psi$, resp. $\cK$.    According to the Fermi Golden Rule  \eqref{fgr}
 the self-energy of $H_{\rm{Fried}}^{L}(\kb)$ is
\begin{align} \label{droppo3} 
\Sigma_\kb^L(z)&=\frac{1}{2 L^3}\sum_{\pb,\kb-\pb\neq0}\frac{h_{\bf k}^2(\pb)}{(z-e_\pb-e_{\kb-\pb})},
\end{align}
where $\frac12$ in front of the sum accounts for double
  counting.

The function $\pb\mapsto e_\pb$ is well defined for all $\pb\in\R^3$,
and not only for $\pb \in\frac{2\pi}{L}\Z^3\setminus\{0\}$. 
Similarly, $h_\kb(\pb)$ are well defined for
all $\pb\in\R^3\setminus\{0,\kb\}$, and not only for
$\frac{2\pi}{L}\Z^3\setminus\{0,\kb\}$. The expression
    \eqref{droppo3}  can be interpreted as the Riemann sum converging
    as $L\to\infty$ to the integral
    \begin{align}
       \Sigma_\kb(z)&=\frac{1}{2(2\pi)^3}\int\frac{h_\kb(\pb)^2\d\pb}{(z-e_\pb-e_{\kb-\pb})}. \label{droppo4} 
\end{align}
We can also introduce the {\em infinite volume effective Friedrichs
Hamiltonian}
\begin{equation}
\begin{aligned}
\label{def:themodela}
 H_{\rm{Fried}}(\kb)&:=\begin{bmatrix} 
 	e_\kb & \lambda(h_\kb|\\
 	\lambda|h_\kb) & e_\pb+e_{\kb-\pb}
      \end{bmatrix},\\
\text{on }\quad  & \C\oplus L^2(\R^3/\Z_2),\end{aligned}
\end{equation}
where $\Z_2$ is the two-element group generated by
$\pb\mapsto\kb-\pb$.
The Fermi Golden Rule predicts that
$\Sigma_\kb(e_\kb+ \ri0)$ describes the energy shift of the
eigenvalue of the infinite volume  Hamiltonian  $
H_{\rm{Fried}}(\kb)$.

It is maybe worth mentioning that all the steps that lead to $
H_{\rm{Fried}}^L(\kb)$ and $
H_{\rm{Fried}}(\kb)$ are translation invariant.

\section{The shape of the quasiparticle spectrum}
\label{sec:cubicirrelevant}

If $\kb\mapsto e_\kb$ is a dispersion relation of quasiparticles, then the infimum of
the $n$-quasiparticle spectrum is
\be\inf\{e_{\pb_1}+\cdots e_{\pb_n}\ |\
\pb_1+\cdots+\pb_n=\kb\}.\label{n-quasi}\ee 
Sometimes, it is possible to compute \eqref{n-quasi} exactly, as shown
in the following lemma.

\begin{lemma} \label{lem:convex_shape} Let $\kb\mapsto e_\kb$ be a convex function. Then
 \be \label{def:infconv}
 \inf_{\pb}\{e_\pb+e_{\kb-\pb}\}= 2 e_{\kb/2}.
 \ee
 In particular, \be\inf_{\pb}\{e_\pb+e_{\kb-\pb}\}\leq e_{\kb}. \label{conv}\ee
 If in addition $\kb\mapsto e_\kb$ is a strictly  convex function, then
 \be\inf_{\pb}\{e_\pb+e_{\kb-\pb}\}< e_{\kb},\quad \kb\neq0. \label{conv1}\ee
 \end{lemma}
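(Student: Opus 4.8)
The plan is to derive all three statements directly from the definition of convexity applied to well-chosen collinear triples of momenta, with essentially no computation. First I would record the midpoint form of convexity: for any $\pb,\qb\in\R^3$ one has $e_{(\pb+\qb)/2}\le \tfrac12 e_\pb+\tfrac12 e_\qb$. Taking $\qb:=\kb-\pb$ gives $e_\pb+e_{\kb-\pb}\ge 2e_{\kb/2}$ for every $\pb$, so $2e_{\kb/2}$ is a lower bound for the set $\{e_\pb+e_{\kb-\pb}\}_\pb$; since this value is attained at $\pb=\kb/2$, the infimum equals $2e_{\kb/2}$, proving \eqref{def:infconv} (with the infimum in fact a minimum).

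For \eqref{conv} I would additionally use that the dispersion relation considered here satisfies $e_0=0$, which is immediate from the explicit formulas \eqref{coeff:bogdisp1} and \eqref{coeff:bogdisp2}. Writing $\kb/2$ as the midpoint of the pair $0,\kb$ and invoking convexity once more gives $e_{\kb/2}\le \tfrac12 e_0+\tfrac12 e_\kb=\tfrac12 e_\kb$, hence $2e_{\kb/2}\le e_\kb$; combined with \eqref{def:infconv} this is exactly \eqref{conv}.

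Finally, for \eqref{conv1} I would rerun the last step using strict convexity: when $\kb\neq0$ the endpoints $0$ and $\kb$ are distinct, so strict convexity upgrades the bound to $e_{\kb/2}<\tfrac12 e_0+\tfrac12 e_\kb=\tfrac12 e_\kb$, whence $2e_{\kb/2}<e_\kb$ and therefore $\inf_\pb\{e_\pb+e_{\kb-\pb}\}<e_\kb$.

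I do not expect a genuine obstacle: the lemma is a one-line application of Jensen's inequality. The only thing deserving care is that \eqref{conv} and \eqref{conv1} fail for a generic convex function (take $e_\kb=1+|\kb|^2$) and really use the normalization $e_0=0$ (more generally $e_0\le 0$), so I would make that input explicit at the point where it enters rather than leave it buried.
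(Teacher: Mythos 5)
Your proof is correct, and for the main identity \eqref{def:infconv} it takes a genuinely different and more elementary route than the paper. The paper recognizes $\inf_\pb\{e_\pb+e_{\kb-\pb}\}$ as the infimal convolution $e\,\square\,e$ and computes it by Legendre--Fenchel duality: $(e\,\square\,e)^*=2e^*$, hence $e\,\square\,e=(2e^*)^*=2e_{\kb/2}$ by biconjugation. That argument is slick but quietly requires $e$ to be proper, convex and lower semicontinuous so that $e=e^{**}$, and requires $(e\,\square\,e)^{**}=e\,\square\,e$, i.e.\ that the infimal convolution itself is lsc --- true for the dispersion relations at hand but an extra hypothesis in general. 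Your two-line Jensen argument ($e_{\kb/2}\le\tfrac12(e_\pb+e_{\kb-\pb})$ for every $\pb$, with equality at $\pb=\kb/2$) avoids all of this and additionally shows the infimum is attained, so it is arguably the better proof. Your remark about the normalization is also well taken: \eqref{conv} and \eqref{conv1} are false for a general convex function, and the paper's own step ``$2e_{\pb/2}=2e_{\pb/2+0/2}\le e_\pb$'' silently uses $e_0=0$ exactly as you do; making that hypothesis explicit (it holds for \eqref{coeff:bogdisp1} and \eqref{coeff:bogdisp2}) is a genuine improvement in precision rather than a deviation. The strict-convexity case is handled identically (the paper does not even write it out).
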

 \begin{proof}
The left hand side of \eqref{def:infconv} is called infimal involution and is often denoted as 
\be 
e \square e ( \kb ):= \inf_{\pb}\{e_\pb+e_{\kb-\pb}\}.
\ee
Since \( e_\kb\) is a convex function so is $e \square e ( \kb )$  \cite[Chapter 12]{BauCom-17} and it satisfies 
\be 
(e \square e) ^*=e^*+e^*=2e^*
\ee
where $e^*$ denotes the   Legendre–Fenchel transform of $e$.  Hence 
\begin{eqnarray*}
   \inf_{\pb}\{e_\pb+e_{\kb-\pb}\} = e \square e ( \kb )
                               = (e \square e)^{**}( \kb )
                                = (2 e^*)^*( \kb )
                                = 2 e_{\kb/2}  
\end{eqnarray*}
which proves \eqref{def:infconv}. Now \eqref{conv}
follows from convexity. Indeed,
$$ 2e_{\pb/2}= 2 e_{\pb/2+0/2}\le  e_{\pb}.$$
\end{proof}

Now  $e_{\kb}$  in \eqref{coeff:bogdisp2}, that is
\be \label{strictly}
e_\kb= \sqrt{\frac{1}{4}|\kb|^4+
          \mu|\kb|^2},\ee is strictly
convex. Therefore, \eqref{conv1} is true, and so the dispersion
relation is embedded inside the 2-quasiparticle spectrum.

\begin{remark} If we replace Assumption \eqref{assu-flat}
  with Assumption
  \eqref{assu-generic}, and suppose
  \be 1+2\mu\frac{\nu}{\hat v(0)}>0,\label{assu-gen2}\ee
  then  the dispersion
  relation is still
  embedded inside the  unperturbed 2-quasiparticle spectrum, at least for
small momenta. The same is true for the  effective Friedrichs
Hamiltonian $H_\mathrm{Fried}(\kb)$ for small $\kb$.\end{remark}

The Hamiltonian $H^\mathrm{exc}$ couples $b_\kb^*\Omega_\Bog$ with
4-quasiparticle states through $H_{3,2}^\exc$. The bottom of
4-quasiparticle spectrum lies below the dispersion relation (in fact,
if it is given by \eqref{coeff:bogdisp2}, it is equal to
$4e_{\kb/4}<e_\kb$). However, $H_{3,2}^\exc$ does not couple 
$b_\kb^*\Omega_\Bog$ to all possible 4-quasiparticle states with the
total momentum $\kb$, but only to states of the form
$b_{\pb_1}b_{\pb_2}b_{\pb_3}b_\kb\Omega_\Bog$ with
$\pb_1+\pb_2+\pb_3=0$. Their energy 
is
\be e_\kb+e_{\pb_1}+e_{\pb_2}+e_{\pb_3}\geq e_\kb.\ee
Thus the state
$b_\kb^*\Omega_\Bog$ is situated at the boundary of the
energy-momentum spectrum and the only coupling is through $\pb_1=\pb_2=\pb_3=0$.
Before going to the thermodynamic limit this is excluded, because on the
excited space all momenta are different from  zero. Assuming that this
effect survives the thermodynamic limit, we expect that the term
$H^\exc_{3,2}$ does not lead to damping and we therefore drop it
from $H_\mathrm{Fried}$, even though
in terms of the coupling parameter $\kappa$ this term  is of the same order as $H^\exc_{3,1}$, which we keep in our analysis.

 Two-quasiparticle states are coupled to
  three-quasiparticle states through $H_{31}^\exc$ and to
five-quasiparticle states through $H_{32}^\exc$. These couplings,
however, do not contribute to our Fermi Golden Rule computation---they
affect the damping rate in a higher order of the coupling
constant. Therefore, we do not include these states in our Hilbert space
$\cF_1^\exc(\kb)\oplus\cF_2^{\rm exc}(\kb)$ on which our effective
Friedrichs Hamiltonian acts.

\section{Computing the self-energy}

In the remaining part of our paper, the main goal will be to compute
approximately the 3-dimensional
integral \eqref{droppo4}. To do this efficiently it
is important to choose a convenient coordinate system.

Let us introduce the notation $k=|\kb|$,  $p=|\pb|$,  $l=|\lb|$, where $\lb=\kb-\pb$. 
One could try to compute \eqref{droppo4} using the spherical
coordinates for $\pb$ with respect to the axis determined by $\kb$. This means using $p=|\pb|,w=\cos\theta,\phi$, so that
$\pb=(p\sqrt{1-w^2}\cos\phi, p\sqrt{1-w^2}\sin\phi,pw)$. 
The self-energy in these coordinates is
 \begin{align}
       \Sigma_\kb(z)&=\frac{1}{2 (2\pi)^3}\int_0^\infty\int_{-1}^1\int_0^{2\pi}\frac{h_\kb(p,w)^2p^2\d 
                              p\d w\d\phi}{(z-e_p-e_{l(p,w)})} \label{droppo5.} 
\end{align}
where, with abuse of notation, $h_\kb(p,w)$ is the function $h_\kb(\pb)$ in the variables $p, w,\phi$. The variable $\phi$ can be easily integrated out. $h_\kb(\pb)$
depends only on $k,p,l$ and (\ref{droppo5.}) can be
rewritten as
 \begin{align*}
       \Sigma_{\kb}(z)&=\frac{1}{2 (2\pi)^2}\int_0^\infty\int_{-1}^1\frac{(h_k(p,l(p,w)))^2p^2\d 
                              p\d w}{(z-e_p-e_{l(p,w)})}, 
 \end{align*}

The coordinates  $p,w$ are not convenient because they break the
natural symmetry $\pb\to \kb-\pb$ of the system. 
Instead of $p,w$ it is much better to use the
variables
$p,l$. Note the constraints
\begin{align} |p-l|&\leq k,\label{triangle1}\\
  k&\leq p+l,\label{triangle2}\end{align}
that follow from the triangle inequality.
We have $w=\frac{k^2+p^2-l^2}{2kp}$.
The Jacobian is easily computed:
\be p^2\d p\d w=\frac{pl}{k}\d p\d l=\frac{1}{4k}\d p^2\d l^2.\ee
Let us make another change of variables:
\be 
t= p+l, \quad  s=p-l;\qquad p=\frac{t+s}{2},\quad l=\frac{t-s}{2}; 
\ee 
\be \d p^2\d l^2=\frac{t^2-s^2}{2}\d t\d s.\ee 
The limits of integration following from the constraints
\eqref{triangle1} and  \eqref{triangle2} are very easy
to impose:
\begin{align} \label{eq:Sigma_t_l}
       \Sigma_{\kb}(z)&=\frac{1}{2 (2\pi)^2}\int_k^\infty\d
                                          t\int_{-k}^k\d s\frac{h_k(t,s)^2
                      (t^2-s^2)                   }{8k(z-e_{\frac{t+s}{2}}-e_{\frac{t-s}{2}})},  
\end{align}

Another choice of variables can also be useful.
If $k\mapsto e_k$ is an increasing function, which is always the case
for small $k$, but also for the important case of constant $\frac{\hat
  v(\kb)}{\hat v(0)}$,
we can use the variables $u:=e_p$ and $w:=e_l$.
Set
\be f(e_k):=\frac {\d k^2}{\d e_k^2}.\ee

Thus we change the variables
\be 
\frac{1}{4k}\d p^2\d l^2=\frac{1}{4k} f(u)f(w)\d u^2\d w^2.\ee
 \begin{align*}
       \Sigma_{\kb}(z)&=\frac{1}{2 (2\pi)^2}\int\frac{h_{k}(u,w)^2f(u)f(w)\d
                                          u^2\d w^2}{4k(z-u-w)}, 
\end{align*}

We then perform a further change of variable 
\be 
x= u+w, \quad  y=u-w;\qquad u=\frac{x+y}{2},\quad w=\frac{x-y}{2}; 
\ee 
\be \d u^2\d w^2=\frac{x^2-y^2}{2}\d x\d y.\ee

Now we can write
\begin{align*}
	\Sigma_{\kb}(z) 
	&= \frac{1}{16 \pi^2 k }\iint\frac{h_{k}(x,y)^2 f(\frac{x+y}{2})f(\frac{x-y}{2}) (x^2-y^2)\d y \d x}{4(z-x)}, 
\end{align*}
where the limits of integration are somewhat more difficult to
describe.

When $\frac{\hat v(\kb)}{\hat v(0)}$ is a constant, so that
\be e_k=k\sqrt{\mu+\frac{k^2}{4}},\qquad k^2=2\big(\sqrt{e_k^2+\mu^2}-\mu\big),\label{consta}\ee
we can compute the function $f$:
\be
f(u)=\frac{1}{\sqrt{u^2+\mu^2}}.
\ee
We also have
\begin{align}
              \sigma_{k}&=\sqrt{\frac{\frac{{k}^2}{2}+\mu+\sqrt{\frac{{k}^4}{4}+\mu{k}^2}}{2\sqrt{\frac{{k}^4}{4}+\mu{k}^2}}},\quad
                                        \gamma_{k}=\sqrt{\frac{\frac{{k}^2}{2}+\mu-\sqrt{\frac{{k}^4}{4}+\mu{k}^2}}{2\sqrt{\frac{{k}^4}{4}+\mu{k}^2}}}.
\end{align}

\section{Damping rate}\label{sec:dampingrate}

The following theorem is the main result of this paper.
\begin{theorem}\label{thm:damping}
 Suppose that the potential satisfies Assumption
  \eqref{assum}. Then
\be\label{damping1}
   \Sigma_{\kb} 
  (e_k+\ri0)=-c_\mathrm{Bel}k^5+
O(k^6)  \qquad \text{as} \qquad k\to 0,\qquad  c_\mathrm{Bel}=\frac{3\hat{v}(0)}{640\pi \mu}. 
  \ee
  \end{theorem}

\begin{remark} If we replace Assumption \eqref{assu-flat} with Assumption   \eqref{assu-generic} with $\nu=0$, then Theorem \ref{thm:damping} remains true.
\end{remark}

  \begin{proof}[Proof of Theorem \ref{thm:damping}]
We will use the variables $x,y$:
\begin{align} 
	\Sigma_{\kb}(e_k+\ri0)
	&= \frac{1}{16 \pi^2 k }\iint\frac{h_{\kb}(x,y)^2(x^2-y^2)\d y \d x}{(e_k-x+\ri 0)\sqrt{(x+y)^2+4\mu^2}\sqrt{(x-y)^2+4\mu^2}}. \label{droppo7} 
\end{align}
It follows from (\ref{droppo7})  and the Sochocki-Plemelj  formula \eqref{eq:sochocki} that 
\begin{align}
\Sigma_{\kb}(e_k+\ri0)&=\mathrm{Re}\Sigma_{\kb}(e_k+\ri0)+\ri\mathrm{Im}\Sigma_{\kb}(e_k+\ri0),\nn \\
\mathrm{Re}\Sigma_{\kb}(e_k+\ri0)	&= \frac{1}{16 \pi^2 k }\iint\frac{h_{\kb}(x,y)^2(x^2-y^2)\d y \d x}{(e_k-x)\sqrt{(x+y)^2+4\mu^2}\sqrt{(x-y)^2+4\mu^2}}\\
\mathrm{Im}\Sigma_{\kb}(e_k+\ri0) 		&=-\frac{ 
    \pi}{16 \pi^2 k} \iint \frac{h_{\kb}(x,y)^2 (x^2-y^2) \delta(e_k-x)\d 
		y\d x}{\sqrt{(x+y)^2+4\mu^2}\sqrt{(x-y)^2+4\mu^2}}
        \label{eqn:int-selfenergy} \\
  		&=-\frac{ \pi}{16 \pi^2 k} \int
           \frac{h_{\kb} (e_k,y)^2 (e_k^2-y^2) \d y}{\sqrt{(e_k+y)^2+4\mu^2}\sqrt{(e_k-y)^2+4\mu^2}} .    \label{eqn:int-selfenergy1}
\end{align}

    Our starting point is the expression \eqref{eqn:int-selfenergy1}. Obviously, we first need to establish the integration limits in $y$. Recall that $y=e_p -e_l$ but under the additional constraint that $e_k =e_p +e_l$  which comes from the constraint $\delta(x-e_k)$ in \eqref{eqn:int-selfenergy}. It follows immediately that
    $-e_k \leq y\leq e_k$. 
      Thus, for $|\kb|$ small enough we can replace $\hat v$ with
      $\hat v(0)$, so that
  \begin{align}
\label{eqn:h(p')}
h_\kb(\pb)&= 2 \sqrt{\mu \hat{v}(0)}
            \Big(\sigma_\pb\gamma_{-\kb}\gamma_{\pb-\kb}+\sigma_{\kb-\pb}\gamma_{-\kb}\gamma_\pb+\sigma_\pb\sigma_{\kb-\pb}\sigma_\kb\\
  &\qquad- \gamma_\pb\sigma_{-\kb}\sigma_{\pb-\kb}-\gamma_{\kb-\pb}\sigma_{-\kb}\sigma_\pb- \gamma_\pb\gamma_{\kb-\pb}\gamma_\kb \Big).\nn
\end{align}
Hence 
\begin{align}
	\frac{h_{\kb}(\pb)}
	 {2 \sqrt{\mu \hat{v}(0)}}
	&=\sigma_k(\sigma_p\sigma_l-\sigma_l\gamma_p-\sigma_p\gamma_l)+\gamma_k(\sigma_p\gamma_l+\sigma_l\gamma_p-\gamma_p\gamma_l).\nn\\
	&= \frac{\sigma_k}{2\sqrt{uw}}\bigg( \sqrt{\sqrt{u^2+\mu^2}+u}\sqrt{\sqrt{w^2+\mu^2}+w}-\sqrt{\sqrt{w^2+\mu^2}+w}\sqrt{\sqrt{u^2+\mu^2}-u}    \nn  \\
	&- \sqrt{\sqrt{u^2+\mu^2}+u}\sqrt{\sqrt{w^2+\mu^2}-w}\bigg)\nn\\
	&+ \frac{\gamma_k}{2\sqrt{uw}}\bigg( \sqrt{\sqrt{u^2+\mu^2}+u}\sqrt{\sqrt{w^2+\mu^2}-w}+\sqrt{\sqrt{w^2+\mu^2}+w}\sqrt{\sqrt{u^2+\mu^2}-u}    \nn  \\
	&- \sqrt{\sqrt{u^2+\mu^2}-u}\sqrt{\sqrt{w^2+\mu^2}-w}\bigg)\\
	&= \frac{1}{2\sqrt{x^2-y^2}}\bigg( \sigma_k \sqrt{(A_1+x+y)(A_2+x-y))}-\gamma_k\sqrt{(A_1-x-y)(A_2-x+y)}    \nn  \\
	&+ (\gamma_k-\sigma_k)\sqrt{(A_1-x-y)(A_2+x-y))}+(\gamma_k-\sigma_k)\sqrt{(A_1+x+y)(A_2-x+y))} \bigg),
\end{align}
where 
\be \label{eqn:A1A2}
A_1:=A_1(x,y)= \sqrt{(x+y)^2+4\mu^2}, \qquad A_2:= A_2(x,y)= \sqrt{(x-y)^2+4\mu^2}.
\ee
Therefore the integrand in \eqref{eqn:int-selfenergy1} becomes 
\begin{align}
	& \frac{(h_{\kb}(x,y))^2 (x^2-y^2)}{\sqrt{(x+y)^2+4\mu^2}\sqrt{(x-y)^2+4\mu^2}}\\
	&= \frac{\mu\hat{v}(0)}{A_1A_2}\bigg( \sigma_k \sqrt{(A_1+x+y)(A_2+x-y))}-\gamma_k\sqrt{(A_1-x-y)(A_2-x+y)}    \nn  \\
	&+ (\gamma_k-\sigma_k)\sqrt{(A_1-x-y)(A_2+x-y))}+(\gamma_k-\sigma_k)\sqrt{(A_1+x+y)(A_2-x+y))} \bigg)^2. \nn \\
 &= \frac{\mu \hat{v}(0)}{A_1A_2}\bigg(\sigma_k^2\left(  3 A_1A_2+(x+y)A_2+(x-y)A_1-(x^2-y^2)-4\mu(A_1+A_2+2x)+8\mu^2\right)\nn \\
	&+\gamma_k^2\left(  3 A_1A_2-(x+y)A_2-(x-y)A_1-(x^2-y^2)-4\mu(A_1+A_2-2x)+8\mu^2\right)\nn \\
	&+ 2 \sigma_k \gamma_k \left( 4\mu A_1+4\mu A_2-2A_1A_2+2(x^2-y^2)-12\mu^2\right) \bigg). \label{droppo11}
\end{align}

Thus the equation in (\ref{eqn:int-selfenergy1}) becomes
\begin{align}
	& -\frac{ 1}{ 16 \pi k}\int_{-e_k}^{e_k} \d y \frac{h_{\kb}^2(x,y)(x^2-y^2)}{\sqrt{(x+y)^2+4\mu^2}\sqrt{(x-y)^2+4\mu^2}}\\
	&= \left( -\frac{ \mu\hat{v}(0)}{ 16 \pi k}\right) \int_{-e_k}^{e_k} \d y \bigg( \left( 3\sigma_k^2+3\gamma_k^2-4\sigma_k\gamma_k \right) + (\sigma_k^2-\gamma_k^2) \left( \frac{x-y}{A_2}+ \frac{x+y}{A_1}-\frac{8\mu x }{A_1A_2}\right)     \nn  \\
	&+ (-\sigma_k^2-\gamma_k^2+4\sigma_k\gamma_k ) \frac{x^2-y^2}{A_1A_2} -4\mu (\sigma_k-\gamma_k)^2\frac{A_1+A_2}{A_1A_2} + 8\mu^2(\sigma_k^2+\gamma_k^2-3\sigma_k\gamma_k )\frac{1}{A_1A_2}\bigg).
	 \label{droppo12}
\end{align}
The integrals involving $\frac{x\pm y}{A_j}$ and $\frac{1}{A_j}$ (where $j=1,2$ ) can be computed explicitly. In particular, setting $x=e_k$,  it follows that for \(j=1,2\)
 \begin{align} \label{expt-intgl-dropp011-a}
&\int_{-e_k}^{e_k} \d y \frac{e_k\pm y}{A_j (e_k,y)}=\int_{-e_k}^{e_k} \d y \left( \frac{e_k\pm y}{\sqrt{(e_k\pm y)^2+4\mu^2}}\right) 
=  2\sqrt{\mu^2+e_k^2}-2\mu, \\
& \label{expt-intgl-dropp011-c}
\int_{-e_k}^{e_k} \d y \frac{1}{A_j (e_k,y)}=  \int_{-e_k}^{e_k} \d y \left( \frac{1}{\sqrt{(e_k\pm y)^2+4\mu^2}}\right) 
= \log \left(\frac{e_k}{\mu}+ \sqrt{1+\frac{e_k^2}{\mu^2}} \right).
\end{align}
This yields 
\begin{align}
	&
	\left( -\frac{ 1}{ 16 \pi k}\right) \int_{-e_k}^{e_k}\d y \left( \frac{h^{2}_{\kb}(e_k,y) (e_k^2-y^2)}{\sqrt{(e_k+y)^2+4\mu^2}\sqrt{(e_k-y)^2+4\mu^2}}\right) \\
	&=\left( -\frac{ \mu\hat{v}(0)}{ 16 \pi k}\right) \left( 2\left( 3\sigma_k^2+3\gamma_k^2-4\sigma_k\gamma_k \right)e_k +  4\sqrt{\mu^2+e_k^2}-4\mu-8\mu (\sigma_k-\gamma_k)^2  \log \left(\frac{e_k}{\mu}+ \sqrt{1+\frac{e_k^2}{\mu^2}} \right)\right)\nn \\ 
	&+ \left( -\frac{ \mu\hat{v}(0)}{ 16 \pi k}\right) \int_{-e_k}^{e_k} \d y \bigg(   \frac{-(\sigma_k^2-4\sigma_k\gamma_k+\gamma_k^2) (e_k^2-y^2)-8\mu e_k+8\mu^2(\sigma_k^2+ \gamma_k^2-3\sigma_k\gamma_k)}{A_1A_2}\bigg)\label{intgl-dropp011-}.
\end{align} 
where two types of integrals, namely
\be 
\int \left( \frac{-y^2}{A_1A_2}\right) \d y \,\,\, \text{and}\,\,\,
\qquad \int \left( \frac{1}{A_1A_2}\right) \d y,
\ee
still appear as they cannot be computed explicitly. We will approximate them by expansions in \(e_k\)(which is small, as $k$ is small). To this end,  we recall
\be 
\sigma_k=\sqrt{\frac{\sqrt{e_k^2+\mu^2}+e_k}{2 e_k}}, \qquad \gamma_k=\sqrt{\frac{\sqrt{e_k^2+\mu^2}-e_k}{2 e_k}}, 
\ee
which gives
\be 
\sigma_k^2+\gamma_k^2= \frac{\sqrt{e_k^2+\mu^2}}{e_k},\qquad \sigma_k\gamma_k=\frac{\mu }{2 e_k}.
\ee 
Then (\ref{intgl-dropp011-})  equals to 
\begin{align}
	& \left( -\frac{ \mu\hat{v}(0)}{ 16 \pi k}\right) \left( 2\left( 3\sigma_k^2+3\gamma_k^2-4\sigma_k\gamma_k \right)e_k +  4\sqrt{\mu^2+e_k^2}-4\mu-8\mu (\sigma_k-\gamma_k)^2  \log \left(\frac{e_k}{\mu}+ \sqrt{1+\frac{e_k^2}{\mu^2}} \right)\right)\nn \\ 
	&+ \left( -\frac{ \mu\hat{v}(0)}{ 16 \pi k}\right) \int_{-e_k}^{e_k} \d y \bigg(   \frac{-(\sigma_k^2-4\sigma_k\gamma_k+\gamma_k^2) (e_k^2-y^2)-8\mu e_k+8\mu^2(\sigma_k^2+ \gamma_k^2-3\sigma_k\gamma_k)}{A_1A_2}\bigg) \nn\\
	&=\left( -\frac{ \mu\hat{v}(0)}{ 16 \pi k}\right) \left(  2(3\sqrt{e_k^2+\mu^2}-2 \mu) + 2 (2\sqrt{\mu^2+e_k^2}-2\mu)- 8 \mu \frac{\sqrt{e_k^2+\mu^2}-\mu}{e_k}\log \left(\frac{e_k}{\mu}+ \sqrt{1+\frac{e_k^2}{\mu^2}} \right) \right) \nn\\
	&+\left( \frac{ \mu\hat{v}(0)}{ 16 \pi k}\right) \frac{\sqrt{e_k^2+\mu^2}-2\mu}{e_k}\int_{-e_k}^{e_k} \d y \left(  \frac{ e_k^2-y^2}{A_1A_2}\right) \nn\\
	&+\left( \frac{ 1}{ 16 \pi k}\right) \left(  8 \mu^2 \hat{v}(0)e_k-8 \mu^3\hat{v}(0)\frac{2\sqrt{e_k^2+\mu^2}-3\mu}{2e_k} \right) \int_{-e_k}^{e_k} \d y \left(  \frac{ 1}{A_1A_2}\right) \nn\\
	&=\left( -\frac{ \mu\hat{v}(0)}{ 16 \pi k}\right) \left(  10\mu \sqrt{(e_k/\mu)^2+1} -8\mu- 8 \mu \frac{\sqrt{(e_k/\mu)^2+1}-1}{e_k/\mu }\log \left(\frac{e_k}{\mu}+ \sqrt{1+\frac{e_k^2}{\mu^2}} \right) \right) \label{intable-part}\\
	&+\left( \frac{ \mu\hat{v}(0)}{ 16 \pi k}\right) \frac{\sqrt{(e_k/\mu)^2+1}-2}{e_k/\mu}\int_{-e_k}^{e_k} \d y \left(  \frac{ e_k^2-y^2}{A_1A_2}\right) \label{nonintable-part-1}\\
	&+\left( \frac{ \mu\hat{v}(0)}{ 16 \pi k}\right) \left( \frac{ 8\mu  e_k^2-4 \mu^3(2\sqrt{(e_k/\mu)^2+1}-3)}{e_k} \right) \int_{-e_k}^{e_k} \d y \left(  \frac{ 1}{A_1A_2}\right)   \label{nonintable-part-2}.
\end{align} 
We expand \eqref{intable-part} up to order $O(e_k^8)$. A tedious computation yields
\be
\eqref{intable-part}=\left( -\frac{ \mu\hat{v}(0)}{ 16 \pi k}\right) \left( 2\mu + \frac{e_k^2}{\mu}+\frac{5e_k^4}{12\mu^3}-\frac{41e_k^6}{120\mu^5}+ O(e_k^8)\right) . \label{expan-intable-part}
\ee
We shall now deal with the terms \eqref{nonintable-part-1} and \eqref{nonintable-part-2}. To this end we write
\begin{align}
	A_1A_2 &=\sqrt{4\mu^2+(e_k+y)^2}\sqrt{4\mu^2+(e_k-y)^2}\\
	&= 4\mu^2 \sqrt{1+ \left(\frac{e_k+y}{2\mu} \right)^2 }\sqrt{1+ \left(\frac{e_k-y}{2\mu} \right)^2 }\\
	&= 4\mu^2\sqrt{1+\frac{e_k^2+ y^2}{2\mu^2}+ \left( \frac{e_k^2-y^2}{4\mu^2}\right) ^2}\\
	&= 4\mu^2\sqrt{1+Q_1}\\
	&= 4\mu^2  \left(1+ \frac{1}{2}Q_1-\frac{1}{8}Q_1^2 
	+ \frac{1}{16}Q_1^3 \right) + O(Q_1^4).
\end{align}
where 
\be 
Q_1:= \frac{e_k^2+ y^2}{2\mu^2}+ \left( \frac{e_k^2-y^2}{4\mu^2}\right) ^2
\ee
Then 
\be 
\frac{1}{A_1A_2}=\frac{1}{4\mu^2 (1+Q_2)}= \frac{1}{4\mu^2} (1- Q_2+Q_2^2-Q_2^3)+ O(Q_2^4)
\ee 
where 
\be 
Q_2:= \frac{1}{2}Q_1-\frac{1}{8}Q_1^2 
+ \frac{1}{16}Q_1^3 .
\ee 
This leads to 
\be \label{expand-1/A1A2}
\frac{1}{A_1A_2}= \frac{1}{4\mu^2}-\frac{e_k^2}{16\mu^4}+\frac{e_k^4}{64\mu^6}-\frac{e_k^6}{256\mu^8}-\frac{y^2}{16\mu^4}+\frac{e_k^2y^2}{16\mu^6}- \frac{9e_k^4y^2}{256\mu^8}+ \frac{y^4}{64\mu^6}- \frac{9e_k^2y^4}{256\mu^8}-\frac{y^6}{256\mu^8} + O (e_k^{\iota_1}y^{\iota_2})
\ee 
where \( \iota_1+\iota_2=7\). In turn
\be 
\int_{-e_k}^{e_k} \frac{1}{A_1A_2}\d y
= \frac{e_k}{2\mu^2}- \frac{e_k^3}{6\mu^4}+\frac{19e_k^5}{240\mu^6}-\frac{13 e_k^7}{280\mu^8} + O(e_k^8)
\ee
and 
 \be 
 \int_{-e_k}^{e_k} \frac{e_k^2-y^2}{A_1A_2}\d y
 =  \frac{e_k^3}{3\mu^2}-\frac{e_k^5}{10\mu^4}+\frac{11 e_k^7}{280\mu^6} + O(e_k^8).
 \ee
 This implies  
 \be 
 \eqref{nonintable-part-1}=\left( \frac{ \mu\hat{v}(0)}{ 16 \pi k}\right) \left( -\frac{e_k^2}{3\mu}+\frac{4e_k^4}{15\mu^3}-\frac{11e_k^6}{84\mu^5}\right)+ O(e_k^8),  \label{expan-nonintable-part-1}
 \ee 
 and 
 \be
 \eqref{nonintable-part-2}=\left( \frac{ \mu\hat{v}(0)}{ 16 \pi k}\right)  \left(2\mu +\frac{4e_k^2}{3\mu}+\frac{3e_k^4}{20\mu^3}-\frac{2e_k^6}{7\mu^5} \right)  +O(e_k^8). \label{expan-nonintable-part-2}
 \ee
 Combining  \eqref{expan-nonintable-part-1}, \eqref{expan-nonintable-part-2} and \eqref{expan-intable-part} we obtain 
 \begin{align}
 	& -\frac{1}{ 16 \pi k }\int_{-e^k}^{e_k}\frac{(h^{\Lambda}_{\kb}(e_k,y))^2(e_k^2-y^2)}{\sqrt{(e_k+y)^2+4\mu^2}\sqrt{(e_k-y)^2+4\mu^2}}\d y \nn \\
 	&=\left( -\frac{ \mu\hat{v}(0)}{ 16 \pi k}\right)  \left( \frac{5}{12}-\frac{41}{120}\right) \frac{e_k^6}{\mu^5}= -\frac{3\hat{v}(0)}{ 640 \pi\mu^4}\frac{e_k^6}{k}.
 	\label{droppo15}
 \end{align}
This yields \eqref{damping1}. 
\end{proof}

                                      \section{  Full self-energy} \label{sec:renorm}

Recall that the self-energy is given by
\begin{align} \label{eq:Sigma_t_l.}
       \Sigma_{\kb}(z)&=\frac{1}{ 2 (2\pi)^2}\int_k^\infty\d
                                          t\int_{-k}^k\d s\frac{h_k(t,s)^2
                      (t^2-s^2)                   }{8k(z-e_{\frac{t+s}{2}}-e_{\frac{t-s}{2}})},
\end{align}
   For contact potentials $h_k(\pb)$, that is, if $\hat v(\kb)=\hat
   v(0)$ for all $\kb$, then $h(\kb)$ is given by
\begin{align}\label{below}
	\frac{h_{\kb}(\pb)}
	 {2 \sqrt{\mu \hat{v}(0)}}
  &=\frac12(\sigma_k+\gamma_k)(\sigma_p\sigma_l-\gamma_p\gamma_l)+\frac12(\sigma_k-\gamma_k)(\sigma_p\sigma_l+\gamma_p\gamma_l
    -2\sigma_p\gamma_l-2\gamma_p\sigma_l).
\end{align}
We easily see that the self-energy is then divergent in the ultraviolet
regime.
Indeed,  for large $t$,
$h_\kb(t,s)$ is asymptotic to a nonzero constant, $e_{\frac{t+s}{2}}$ and
  $e_{\frac{t-s}{2}}$ behave as $t^2$ and we have $t^2$ in the
    numerator. Therefore,  \eqref{eq:Sigma_t_l.} is linearly divergent
    at large $t$. We should not be surprised---contact potentials are
    not true potentials, they need a renormalization of the coupling
    constant, therefore they may lead to problems.

    For generic potentials   $h_k$ is given by \eqref{eqn:h(p)}:
      \begin{align}
    \label{eqn:h(p).}
h_\kb(\pb)&=  \sqrt{\frac{\mu \hat{v}^2(\kb) 
            }{\hat{v}(0)}}\big(\gamma_\kb-\sigma_\kb\big) 
            \big(\gamma_\pb\sigma_{\kb-\pb}+\sigma_\pb \gamma_{\kb-\pb}\big)\\
    &+ \sqrt{\frac{\mu \hat{v}^2(\pb) 
      }{\hat{v}(0)}}\big(\sigma_\kb\sigma_\pb\sigma_{\kb-\pb}-\gamma_\kb\gamma_\pb\gamma_{\kb-\pb}+\gamma_\kb\sigma_\pb\gamma_{\kb-\pb}
      -\sigma_\kb\gamma_\pb\sigma_{\kb-\pb}\big)\nn\\
          &+\sqrt{\frac{\mu \hat{v}^2(\kb-\pb) 
      }{\hat{v}(0)}}\big(\sigma_\kb\sigma_\pb\sigma_{\kb-\pb}-\gamma_\kb\gamma_\pb\gamma_{\kb-\pb}+\gamma_\kb\gamma_\pb\sigma_{\kb-\pb}
      -\sigma_\kb\sigma_\pb\gamma_{\kb-\pb}\big).\nn
  \end{align}
If we assume \eqref{assu2}, then
 $\hat v$ decays sufficiently fast and provides  a natural cutoff, so
 that
 the self-energy is
well-defined. We formulate this as a theorem:

\begin{theorem} \label{thm:finite}
  Suppose that Assumption  (\ref{assum}) holds.
  Then for $\kb\neq0$, the  self-energy  $\Sigma_{\kb}(z)$
 for
  $\mathrm{Im}z>0$ is finite.
One can also take its limit 
on the real line:
  \begin{equation}
   \Sigma_{\kb}(e_\kb+\ri0):=\lim_{\epsilon\searrow0}\Sigma_{\kb}(e_\kb+\ri\epsilon)
    .\label{seemoa}
  \end{equation}

  The same is true  the cutoff self-energy
  $\Sigma_{\kb}^\Lambda(z)$ involving contact potentials.
 \end{theorem}

\proof Let us sketch a proof of the first statement. For large $|\kb|$
we have
\be\sigma_k\simeq1,\qquad\gamma_k\simeq\frac{2\hat
  v(\kb)}{k^2}.\label{bebe}\ee
$h_k(t,s)^2$ contains several terms. Those containing
$\gamma_{\frac{t\pm s}{2}}$ are integrable because of \eqref{bebe}
and \eqref{assu2}. 
The only dangerous terms in $h_k(t,s)^2$  are
\be\frac{\mu\hat v(\frac{t\pm s}{2})^2}{
  \hat
  v(0)}\sigma_k^2\sigma_{\frac{t+s}2}^2\sigma_{\frac{t-s}2}^2.\ee
They are integrable by \eqref{assu2}. \qed
 
Unfortunately, there are also bad news. The energy shift has a
non-physical feature: it diverges as $\kb\to0$, as follows from the
theorem below. Therefore, we cannot treat seriously the results
obtained from the Fermi Golden Rule  concerning the real part of the
excitation spectrum.

\begin{theorem}\label{thm:renormalizationinfinity}
   Suppose that Assumption  (\ref{assum}) holds.
   Then  \begin{equation}
    \lim\limits_{k\to0}\Sigma_{\kb}(0)=-\infty.\label{seemo1}
  \end{equation}
\end{theorem}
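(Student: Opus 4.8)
The plan is to start from the explicit representation \eqref{droppo4} of the cutoff self-energy at $z=0$,
\[
\Sigma_\kb^\Lambda(0)=-\frac{1}{(2\pi)^3}\int_{|\pb|+|\kb-\pb|<\Lambda}\frac{h_\kb(\pb)^2}{e_\pb+e_{\kb-\pb}}\,\d\pb ,
\]
which by Theorem \ref{thm:renormalizationLambda} is finite and is manifestly strictly negative. It therefore suffices to show that this integral tends to $+\infty$ as $k\to0$, and the mechanism will be that the coupling $h_\kb(\pb)$ itself diverges as $\kb\to0$, essentially uniformly in $\pb$ on compacts. The root cause is that, with $e_\kb\to0$ and $B_\kb\to\mu$, one has $\sigma_\kb^2=\tfrac{\sqrt{e_\kb^2+B_\kb^2}+e_\kb}{2e_\kb}\sim\mu/(2e_\kb)\to\infty$, whereas $\sigma_\kb-\gamma_\kb=(\sigma_\kb+\gamma_\kb)^{-1}\to0$.

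The first step is to isolate the divergent part of $h_\kb(\pb)$. Writing $p=|\pb|$, $l=|\kb-\pb|$ and using $\sigma_{-\kb}=\sigma_\kb$, $\gamma_{-\kb}=\gamma_\kb$, $\sigma_{\pb-\kb}=\sigma_l$, $\gamma_{\pb-\kb}=\gamma_l$, I would regroup the six terms of \eqref{eqn:h(p)} according to their dependence on $\sigma_\kb,\gamma_\kb$ and substitute $\gamma_\kb=\sigma_\kb-(\sigma_\kb-\gamma_\kb)$; a short computation gives
\[
h_\kb(\pb)=2\sqrt{\frac{\mu\,\hat v^{2}(\kb)}{\hat v(0)}}\,\Big(\sigma_\kb\,(\sigma_p\sigma_l-\gamma_p\gamma_l)-(\sigma_\kb-\gamma_\kb)\,(\sigma_p\gamma_l+\sigma_l\gamma_p-\gamma_p\gamma_l)\Big).
\]
Now fix a compact spherical shell $K\subset\R^3$ bounded away from the origin. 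As $\kb\to0$ one has $l\to p$ uniformly on $K$, hence $\sigma_l\to\sigma_p$, $\gamma_l\to\gamma_p$ uniformly, so $\sigma_p\sigma_l-\gamma_p\gamma_l\to\sigma_p^2-\gamma_p^2=1$ uniformly on $K$; the second bracket stays bounded on $K$ and the prefactor tends to $2(\mu\hat v(0))^{1/2}>0$. Combined with $\sigma_\kb\to\infty$ and $\sigma_\kb-\gamma_\kb\to0$, this yields a uniform lower bound $h_\kb(\pb)^2\ge c_K\,\sigma_\kb^2$ on $K$ for all sufficiently small $\kb$.

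To conclude, take $K=\{\,\Lambda/8\le|\pb|\le\Lambda/4\,\}$. For $|\kb|\le\Lambda/16$ one checks that $|\pb|+|\kb-\pb|<\Lambda$ on $K$ (so $K$ lies inside the cutoff region), that $|\kb-\pb|\ge\Lambda/16$ on $K$ (so $K$ stays away from $\pb=\kb$), and that $e_\pb+e_{\kb-\pb}$ is bounded above on $K$ by some $C_K$ uniformly in such $\kb$. Hence, for all small $\kb$,
\[
\Sigma_\kb^\Lambda(0)\ \le\ -\frac{1}{(2\pi)^3}\int_K\frac{h_\kb(\pb)^2}{e_\pb+e_{\kb-\pb}}\,\d\pb\ \le\ -\frac{c_K\,|K|}{(2\pi)^3\,C_K}\,\sigma_\kb^2\ \longrightarrow\ -\infty\qquad(k\to0),
\]
because $\sigma_\kb^2\sim\mu/(2e_\kb)\to\infty$. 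This proves \eqref{seemo1}, and in fact exhibits the rate: $\Sigma_\kb^\Lambda(0)$ diverges at least like $-\mathrm{const}\cdot e_k^{-1}\sim-\mathrm{const}\cdot k^{-1}$.

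The main obstacle is the uniform lower bound $h_\kb(\pb)^2\ge c_K\sigma_\kb^2$: its whole point is the cancellation $\sigma_p\sigma_l-\gamma_p\gamma_l\to1$, and making it rigorous requires the precise regrouping above together with a uniform (in $\pb\in K$) control of $\sigma_l-\sigma_p$ and $\gamma_l-\gamma_p$. This is elementary by uniform continuity on compacts, but it is essential that $K$ avoid the singular points $\pb=0$ and $\pb=\kb$, where $e_\pb$ or $e_{\kb-\pb}$ vanishes and the amplitudes $\sigma,\gamma$ blow up; one must also keep $\Lambda$ fixed and $|\kb|$ small enough that such a shell $K$ fits inside the cutoff region.
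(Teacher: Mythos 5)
Your proof is correct, and while it rests on the same key algebraic observation as the paper's argument, the way you extract the divergence is genuinely different and more elementary. Both proofs regroup \eqref{eqn:h(p)} so that $h_\kb(\pb)$ splits into a divergent amplitude (your $\sigma_\kb$, the paper's $\tfrac12(\sigma_k+\gamma_k)$ — asymptotically the same) multiplying $\sigma_p\sigma_l-\gamma_p\gamma_l$, plus a term proportional to the vanishing combination $\sigma_k-\gamma_k$. The paper then expands $h_\kb^2$ into the three pieces \eqref{quw1}--\eqref{quw} and, to show the leading piece diverges like $k^{-1}$, carries out a global analysis in the $(t,s)$ variables including the delicate region near $\pb\approx 0$ and $\pb\approx\kb$ where $\sigma_p,\gamma_p,\sigma_l,\gamma_l$ blow up; this requires Lemmas \ref{sss} and \ref{quw30} and yields the precise limiting coefficient $\int_0^\Lambda t^2/(64\,e_{t/2})\,\d t$, after which the cross term and the $(\sigma_k-\gamma_k)^2$ term must be shown separately not to spoil the divergence. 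You instead exploit the positivity of the integrand $h_\kb^2/(e_\pb+e_{\kb-\pb})$ to discard everything outside a fixed shell $K$ bounded away from the singular points $\pb=0$ and $\pb=\kb$; on $K$ all $\pb$-dependent amplitudes are uniformly bounded, $\sigma_p\sigma_l-\gamma_p\gamma_l\to\sigma_p^2-\gamma_p^2=1$ uniformly, and the pointwise lower bound $h_\kb(\pb)^2\ge c_K\sigma_\kb^2$ follows from elementary uniform continuity, with $\sigma_\kb^2\sim\mu/(2e_\kb)\to\infty$. This avoids the two technical lemmas entirely and sidesteps the term-by-term control of the cross terms, since you bound $|h_\kb|$ from below before squaring. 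What you give up is the sharp asymptotics: the paper's route identifies the exact $k^{-1}$ leading behaviour of $\Sigma_\kb^\Lambda(0)$, whereas yours gives only $\Sigma_\kb^\Lambda(0)\le-\mathrm{const}\cdot k^{-1}$ for small $k$ — which is all that Theorem \ref{thm:renormalizationinfinity} requires.
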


   \begin{remark}
    The same statement could be obtained for more general potentials,
    e.g. satisfying  Assumption
  \eqref{assu-generic} instead of \eqref{assu-flat}
\end{remark}

  First note that under Assumption \eqref{assu-flat},
  for $|\kb|,|\pb|,|\kb-\pb|<\Lambda$  we have
  \be e_k=\eqref{strictly},\qquad h_k(\pb)=\eqref{below},\ee
  as for contact potentials.

Let $|\kb|<\Lambda/2$ and let us split the integral for the self-energy
as \be \eqref{eq:Sigma_t_l.}=\int_0^{\Lambda/2}\d
t+\int_{\Lambda/2}^\infty\d t.
\label{eq:Sigma_t_l..}
\ee
Taking into account $|s|<|\kb|$ we see
that the first integral involves only  $|\kb|,|\pb|,|\kb-\pb|<\Lambda$.
Therefore, in this integral all quantities such as $e_k$, $\sigma_k$,
$\gamma_k$,   $e_p$, $\sigma_p$,
$\gamma_p$,  $e_l$, $\sigma_l$,
$\gamma_l$,    
are as for contact
potentials.
Let us prove some lemmas about these quantities.

\begin{lemma}\label{sss} For small $p,l$, we have
  \begin{align}
        \frac{e_{\frac{t}{2}}}{e_p+e_l}-\frac12&=O(s^2),\label{iu1}\\
    \frac{pl}{e_pe_l}-\frac{t^2}{4e_{\frac{t}{2}}^2}&=O(s^2),\label{iu2}\\
    \sigma_p\sigma_l\sqrt{e_pe_l}-\sigma_{\frac{t}{2}}^2e_{\frac{t}{2}}&=O(s^2),\label{iu3}\\
    \gamma_p\gamma_l\sqrt{e_pe_l}-\gamma_{\frac{t}{2}}^2e_{\frac{t}{2}}&=O(s^2).          \label{iu4}                                   \end{align}
\end{lemma}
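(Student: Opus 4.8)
To prove Lemma~\ref{sss} the plan is to reduce all four identities to one elementary observation about functions symmetric under the exchange $p\leftrightarrow l$. In the variables $t=p+l$, $s=p-l$, so that $p=\tfrac{t+s}{2}$, $l=\tfrac{t-s}{2}$, this exchange is exactly the reflection $s\mapsto-s$. Each left-hand side of \eqref{iu1}--\eqref{iu4} has the form $G(p,l)-g(t)$, where $G$ is symmetric in its two arguments and $g$ depends on $t$ alone; regarded as a function $F(t,s)$ it is therefore \emph{even in $s$}. If in addition $F(t,\cdot)$ is $C^2$ near $s=0$, locally uniformly in the small parameter $t$, and $F(t,0)=0$, then by parity its Taylor expansion in $s$ starts at order $s^2$, i.e.\ $F(t,s)=s^2 R(t,s)$ with $R$ bounded near the origin, so $F(t,s)=O(s^2)$ uniformly in small $t$. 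So the proof splits into two checks: (i) the subtracted term $g(t)$ equals the diagonal value $G(\tfrac t2,\tfrac t2)$, giving $F(t,0)=0$; and (ii) $F$ is $C^2$ (indeed smooth) in a neighbourhood of $\{s=0\}$.

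Check (i) is a one-line substitution $p=l=\tfrac t2$ in the first term of each line: $e_p+e_l=2e_{\frac t2}$ gives $\tfrac{e_{t/2}}{e_p+e_l}=\tfrac12$ in \eqref{iu1}; $pl=\tfrac{t^2}{4}$ and $e_pe_l=e_{\frac t2}^2$ give $\tfrac{pl}{e_pe_l}=\tfrac{t^2}{4e_{\frac t2}^2}$ in \eqref{iu2}; and $\sigma_p\sigma_l\sqrt{e_pe_l}=\sigma_{\frac t2}^2 e_{\frac t2}$, $\gamma_p\gamma_l\sqrt{e_pe_l}=\gamma_{\frac t2}^2 e_{\frac t2}$ in \eqref{iu3}, \eqref{iu4}. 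In each case this is exactly the subtracted quantity, so $F(t,0)=0$.

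For (ii) the key is that $\mu>0$ keeps everything regular. The map $k\mapsto e_k=k\sqrt{\mu+k^2/4}$ extends to an odd smooth function on $\R$; the combination $\tfrac{p}{e_p}=(\mu+p^2/4)^{-1/2}$ is smooth and positive; and $\sigma_p\sqrt{e_p}=\big(\tfrac12(\sqrt{e_p^2+\mu^2}+e_p)\big)^{1/2}$, $\gamma_p\sqrt{e_p}=\big(\tfrac12(\sqrt{e_p^2+\mu^2}-e_p)\big)^{1/2}$ are smooth and bounded above and below by positive constants for $p$ in a bounded range (the radicands never vanish, since $\sqrt{e_p^2+\mu^2}>|e_p|$). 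Hence the first terms of \eqref{iu2}--\eqref{iu4}, being products and quotients of these, are manifestly smooth in $(p,l)$, hence in $(t,s)$. The one slightly delicate point — which I would flag as the main obstacle — is \eqref{iu1}, where $\tfrac{e_{t/2}}{e_p+e_l}$ is formally $\tfrac00$ as $(p,l)\to0$. Since $e$ is odd, $e_p+e_l$ is a smooth symmetric function on $\R^2$ vanishing on $\{p+l=0\}$, so by Hadamard's lemma (after the change of coordinates $u=p+l$, $v=p-l$) one has $e_p+e_l=(p+l)\,\Psi$ with $\Psi=\Psi(t,s)$ smooth and $\Psi(t,0)=\sqrt{\mu+t^2/16}$, in particular $\Psi(0,0)=\sqrt{\mu}>0$; therefore $\tfrac{e_{t/2}}{e_p+e_l}=\tfrac{(t/2)\sqrt{\mu+t^2/16}}{t\,\Psi}=\tfrac{\sqrt{\mu+t^2/16}}{2\Psi}$ is smooth near the origin. (Alternatively, since only a $C^2$ bound in $s$ is needed, one can bypass the general smoothness statement and expand $e_p+e_l$ and $e_{\frac t2}$ in powers of $s$ directly.) Once (i) and (ii) are in place, the parity argument of the first paragraph delivers \eqref{iu1}--\eqref{iu4} simultaneously.
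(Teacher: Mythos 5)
Your proof is correct and follows essentially the same route as the paper's: both exploit the symmetry $p\leftrightarrow l$ (evenness in $s$), the vanishing of the difference at $s=0$, and bounds on second derivatives — the paper just writes the second-order Taylor remainder as an explicit integral of $e''$, $f''$, etc., instead of invoking parity abstractly. Your Hadamard-lemma treatment of the apparent $\tfrac00$ in \eqref{iu1} is a clean substitute for the paper's division of the $O(ts^2)$ numerator by the lower bound $e_p+e_l\gtrsim t$.
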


\proof We can assume that $s\geq0$.
\begin{equation}
  e_p'=\big(\tfrac{p^2}{2}+\mu\big)\big(\tfrac{p^2}{4}+\mu\big)^{-\frac12},\quad
  e_p''=p\big(\tfrac{p^2}{8}+\tfrac{3\mu}{4}\big)\big(\tfrac{p^2}{4}+\mu\big)^{-\frac32}=O(p).  
\end{equation}
Therefore,
\begin{equation*}
  2e_{\frac{t}{2}}-e_p-e_l =-\int_{-\frac{s}{2}}^{\frac{s}{2}}\big(\tfrac{s}{2}-|v|\big)e_{\frac{t}{2}+v}''\d v = O(ts^2),
  \end{equation*}
 and hence
 \begin{equation*}
    \frac{e_{\frac{t}{2}}}{e_p+e_l}-\frac12 =\frac{  2e_{\frac{t}{2}}-e_p-e_l }{2(e_p+e_l)}
\end{equation*}
is $O(s^2)$, which proves \eqref{iu1}.

Next, set $f(p):=\frac{p}{e_p}$. We have
\begin{equation}
  \frac{\d}{\d p}f(p)=\frac{-2p}{(p^2+4\mu)^{\frac32}}=O(p),\qquad
  \frac{\d^2}{\d p^2}f(p)=\frac{4(p^2-2\mu)}{(p^2+4\mu)^{\frac52}}=O(1).
\end{equation}
Hence
\begin{align}\label{iu5}
  &\frac{pl}{e_pe_l}-\frac{t^2}{4e_{\frac{t}{2}}^2}=
                                                    f(p)f(l)-f\big(\tfrac{t}{2}\big)^2\\
  =&\int_0^{\frac{s}{2}}\big(\tfrac{s}{2}-v\big)\Big(
  f''\big(\tfrac{t}{2}+v\big)   f\big(\tfrac{t}{2}-v\big)-
  2  f'\big(\tfrac{t}{2}+v\big)   f'\big(\tfrac{t}{2}-v\big)
  +f\big(\tfrac{t}{2}+v\big)   f''\big(\tfrac{t}{2}-v\big)\Big)\d v,&
                             \nn                   \end{align}
which is $O(s^2)$, which proves \eqref{iu2}.

We check that the 0th, 1st and 2nd derivatives of
\begin{align}
  \sigma_p\sqrt{e_p}=\frac{1}{\sqrt2}\sqrt{\tfrac{p^2}{2}+\mu+\sqrt{\tfrac{p^4}{4}+\mu 
  p^2}},\\
    \gamma_p\sqrt{e_p}=\frac{1}{\sqrt2}\sqrt{\tfrac{p^2}{2}+\mu-\sqrt{\tfrac{p^4}{4}+\mu 
  p^2}}
  \end{align}
are bounded. Then we argue as in 
\eqref{iu5}, proving \eqref{iu3} and \eqref{iu4}. \qed

\begin{lemma}\label{quw30}
\begin{align}
\lim_{k\to0}  \int_k^\Lambda\d 
                                          t\int_{-k}^k\d
    s\frac{(\sigma_p\sigma_l-\gamma_p\gamma_l)^2
    pl                  }{8k(e_p+e_l)}=\int_0^\Lambda\d
  t\frac{t^2}{64e_{\frac{t}{2}}},
\label{quw3}  \end{align}
where the right hand side is a finite positive number. \end{lemma}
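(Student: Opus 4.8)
\medskip
\noindent\emph{Proof plan.}
The plan is to treat $t$ as the outer integration variable and $s$ as the inner one, to compute the $k\to0$ limit of the inner $s$-integral for each fixed $t$, and then to pass this limit through the remaining $\int_0^\Lambda\d t$ by dominated convergence. First I would fix $t\in(0,\Lambda]$. As the inner integral runs over $s\in[-k,k]$, both $p=\tfrac{t+s}{2}$ and $l=\tfrac{t-s}{2}$ tend to $\tfrac t2>0$, uniformly in $s\in[-k,k]$, when $k\to0$. Using Lemma~\ref{sss} (or simply the smoothness of $q\mapsto e_q,\sigma_q,\gamma_q$ near $\tfrac t2$) together with the Bogoliubov identity $\sigma_q^2-\gamma_q^2=1$ (immediate from the definitions of $\sigma_q,\gamma_q$), one checks that for $|s|\le k$ the integrand $\dfrac{(\sigma_p\sigma_l-\gamma_p\gamma_l)^2\,pl}{e_p+e_l}$ differs from its value at $s=0$ by a term $O(s^2)$ uniform in $s$; integrating in $s$ (the interval $[-k,k]$ has length $2k$) and dividing by $8k$ then gives that the inner integral converges as $k\to0$ to the fibrewise limit appearing on the right-hand side of \eqref{quw3}.

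Second, to legitimize the exchange of $\lim_{k\to0}$ with $\int_0^\Lambda\d t$ I would exhibit a $k$-independent integrable majorant for the inner integral. For the denominator I invoke the linear lower bound $e_p+e_l\ge c(p+l)=ct$ of \eqref{seemo2}. For the numerator, the essential observation is that, although $\sigma_q$ and $\gamma_q$ individually diverge like $q^{-1/2}$ as $q\to0$, the combination entering here stays controlled: setting $\beta_q:=\sigma_q-\gamma_q$, the identity $\sigma_q^2-\gamma_q^2=1$ gives $\sigma_q+\gamma_q=\beta_q^{-1}$, whence
\[
\sigma_p\sigma_l-\gamma_p\gamma_l=\frac{\beta_p^{2}+\beta_l^{2}}{2\,\beta_p\beta_l},
\qquad\text{so}\qquad
(\sigma_p\sigma_l-\gamma_p\gamma_l)^2\,pl=\frac{(\beta_p^{2}+\beta_l^{2})^{2}}{4\,\beta_p^{2}\beta_l^{2}}\;pl .
\]
Since $\beta_q^{2}=\dfrac{e_q}{\sqrt{e_q^{2}+\mu^{2}}+\mu}$ is continuous on $[0,\Lambda]$ and comparable to $q$ there (it vanishes linearly at $q=0$), the right-hand side above is $\le C_\Lambda\,(p+l)^2=C_\Lambda\,t^2$ throughout the integration region. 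Hence the inner integrand is $\le\dfrac{C_\Lambda t^2}{8k\,ct}=\dfrac{C_\Lambda t}{8ck}$, so the inner integral is bounded by $\dfrac{C_\Lambda t}{4c}=:\Phi(t)$ with $\Phi\in L^1(0,\Lambda)$. Dominated convergence then yields \eqref{quw3}, and its right-hand side is finite (the integrand is continuous on $(0,\Lambda]$ and, since $e_{t/2}\sim\tfrac t2\sqrt\mu$, behaves like a constant multiple of $t$ as $t\to0$) and strictly positive.

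The one point that needs genuine care — and hence the main obstacle — is the corner $t\approx k$ of the integration region, where $p$ or $l$ may be arbitrarily close to $0$ and $\sigma_q,\gamma_q$ blow up. It is precisely the algebraic cancellation recorded in $\sigma_p\sigma_l-\gamma_p\gamma_l=\dfrac{\beta_p^{2}+\beta_l^{2}}{2\beta_p\beta_l}$, together with $\beta_q^{2}\asymp q$, that keeps $(\sigma_p\sigma_l-\gamma_p\gamma_l)^2\,pl$ of order $t^2$ there and so makes the majorant $\Phi$ integrable near $t=0$; everything else — the fibrewise limit of the first paragraph and the resulting elementary one-dimensional integral — is routine.
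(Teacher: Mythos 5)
Your proof is correct and takes a genuinely different route from the paper's. The paper subtracts the limiting integrand $\tfrac{t^2}{64ke_{t/2}}$ pointwise, splits the difference into three terms, and controls each by the quantitative Taylor--remainder bounds of Lemma~\ref{sss} (each bracket is $O(s^2)$, each prefactor at worst $\tfrac1k O(t^{-1})$), so that the error integrates to $O(k^2\ln k)$. You instead compute the fibrewise limit of the inner $s$-integral and interchange it with $\int_0^\Lambda\d t$ by dominated convergence; your key input is the identity $\sigma_p\sigma_l-\gamma_p\gamma_l=\frac{\beta_p^2+\beta_l^2}{2\beta_p\beta_l}$ with $\beta_q^2=\frac{e_q}{\sqrt{e_q^2+\mu^2}+\mu}\asymp q$ on $[0,\Lambda]$, which gives the uniform bound $(\sigma_p\sigma_l-\gamma_p\gamma_l)^2pl\le C_\Lambda t^2$ and hence, via $e_p+e_l\ge ct$ from \eqref{seemo2}, the integrable majorant $\Phi(t)=Ct$. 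This disposes of the delicate corner $t\approx k$ (where $p$ or $l$ degenerates to $0$ and $\sigma,\gamma$ blow up) in one stroke and is arguably cleaner than the paper's term-by-term bookkeeping; what the paper's route buys in exchange is an explicit convergence rate, which dominated convergence does not provide.

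One caveat: if you actually carry out your fibrewise computation, the inner integral tends to $\frac{2k}{8k}\cdot\frac{(t/2)^2}{2e_{t/2}}=\frac{t^2}{32e_{t/2}}$, i.e.\ twice the integrand on the right-hand side of \eqref{quw3}; you assert, without computing it, that your limit matches the stated constant. The missing factor $2$ (the length of the $s$-interval $[-k,k]$) appears to have been dropped in the paper's own final step as well, so this is not a flaw of your method --- but your argument determines the constant, and you should not defer to the stated one. The discrepancy is immaterial for the way the lemma is used, since only finiteness and strict positivity of the limit matter downstream.
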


\proof
We have
\begin{align}
  &\frac{(\sigma_p\sigma_l-\gamma_p\gamma_l)^2 
  pl                  }{8k(e_p+e_l)}- \frac{ t^2}{8\cdot 8ke_{\frac{t}{2}}}\\
  =&
\frac{\big((\sigma_p\sigma_l-\gamma_p\gamma_l)\sqrt{e_pe_l}+ e_{\frac{t}{2}}\big)
     pl    }{8k(e_p+e_l)e_pe_l}
       \Big((\sigma_p\sigma_l-\gamma_p\gamma_l)\sqrt{e_pe_l}- e_{\frac{t}{2}}\big)
                  \Big)\label{tet1}\\
  +&\frac{ e_{\frac{t}{2}}^2}{8k(e_p+e_l)}\Big(\frac{pl}{e_pe_l}-\frac{t^2}{4e_{\frac{t}{2}}^2}\Big)\label{tet2}\\
  +&\frac{ t^2}{32ke_{\frac{t}{2}}}\Big(\frac{e_{\frac{t}{2}}}{e_p+e_l}-\frac12\Big).\label{tet3}
\end{align}
By Lemma \ref{sss} the terms in the big brackets on the right of \eqref{tet1},
\eqref{tet2} and \eqref{tet3} are $O(s^2)$. The terms in
 \eqref{tet2},
\eqref{tet3}  on the left are all $\frac{1}{k}O(t)$. The most singular in $t$ term is the one on the left of  \eqref{tet1} and it is of order $\frac{1}{k}O(t^{-1})$.  Therefore,
\begin{align}
&  \int_k^\Lambda\d 
                                          t\int_{-k}^k\d
    s\Bigg(\frac{(\sigma_p\sigma_l-\gamma_p\gamma_l)^2
    pl
  }{8k(e_p+e_l)}-\frac{ t^2}{64e_{\frac{t}{2}}}\Bigg)\\
  =&  \int_k^\Lambda\d 
                                          t\int_{-k}^k\d 
     s  O(t^{-1})\frac{O(s^2)}{k}
     = \int_k^\Lambda\d 
                                          t O(t^{-1}k^2)=O(k^2\ln k)\to0.
\label{quw3a}  \end{align}

\qed

 \noindent{\it Proof of Theorem \ref{thm:renormalizationinfinity}.}
  
   The second integral on the right of \eqref{eq:Sigma_t_l..} is
   convergent as $k\to0$. Let us consider the first integral:
 \begin{align} 
& \frac{1}{ 2 (2\pi)^2}\int_k^{\Lambda/2}\d
                                          t\int_{-k}^k\d s\frac{h_k(t,s)^2
                      (t^2-s^2)                   }{8k(z-e_{\frac{t+s}{2}}-e_{\frac{t-s}{2}})}\\
  =&(\sigma_k+\gamma_k)^2 \int_k^{\Lambda/2}\d 
                                          t\int_{-k}^k\d
    s\frac{(\sigma_p\sigma_l-\gamma_p\gamma_l)^2
    pl                  }{2k(e_p+e_l)}\label{quw1}\\
  +&2  \int_k^{\Lambda/2}\d 
                                          t\int_{-k}^k\d
    s\frac{(\sigma_p\sigma_l-\gamma_p\gamma_l) (\sigma_p\sigma_l+\gamma_p\gamma_l-2\sigma_p\gamma_l-2\gamma_p\sigma_l)
    pl                  }{2k(e_p+e_l)}\label{quw2}\\+&
  (\sigma_k-\gamma_k)^2
\int_k^{\Lambda/2}\d 
                                          t\int_{-k}^k\d
     s\frac{(\sigma_p\sigma_l+\gamma_p\gamma_l-2\sigma_p\gamma_l-2\gamma_p\sigma_l)^2
     pl                  }{2k(e_p+e_l)}\label{quw}
\end{align}
where we used that $\sigma_k^2 - \gamma_k^2 =1$. Since ${\Lambda/2}$ is fixed we are only interested in the small $t$ region. Since $k$ is small too, this implies also $p$ and $l$ are small. For such we have 
\begin{align}\label{quww1} (\sigma_k+\gamma_k)^2&\geq Ck^{-1},\quad C>0\\\label{quww2}
  (\sigma_k-\gamma_k)^2&=O(k),\\\label{quww4}
  (\sigma_p\sigma_l-\gamma_p\gamma_l)\sqrt{pl}&=O(p)+O(l)=O(t),\\\label{quww5}
  (\sigma_p\sigma_l+\gamma_p\gamma_l-2\sigma_p\gamma_l-2\gamma_p\sigma_l)
  \sqrt{ pl              }&=O(1), \\
  \frac{1}{e_p+e_l}&=O(t^{-1}). \label{quww6}\end{align}
By Lemma \ref{quw30} and \eqref{quww1},
                            \begin{align}
                              |\eqref{quw1}|&\geq C_1 k^{-1}\to
                                              +\infty.
                            \end{align}
                            By  \eqref{quww4},
                            \eqref{quww5} and \eqref{quww6},
\begin{align}
                              |\eqref{quw2}|&\leq C 
\int_k^{\Lambda/2}\d 
                                          t\int_{-k}^k\d
                                              s\frac{1}{k} \rightarrow C_\Lambda \qquad \text{as}\,\,\, k\to 0.
 \end{align}

                                                                                        By \eqref{quww2},
                            \eqref{quww5} and \eqref{quww6},
\begin{align}|\eqref{quw}|&\leq C k
\int_k^\Lambda\d 
                                          t\int_{-k}^k\d
                                              s\frac{1}{kt}\leq Ck|\ln(k)|\to0,
                            \end{align}
Hence \eqref{eq:Sigma_t_l.} converges to $-\infty$.
                            \qed
\\
\\
\\
\noindent
{\bf Acknowledgements.} The work of all authors was supported by the Polish-German NCN-DFG grant Beethoven Classic 3 (project
no. 2018/31/G/ST1/01166).  We thank the anonymous referees for many helpful remarks.
\\
\\
\noindent
{\bf Data Availability} This manuscript has no associated data.\\
\\
\noindent
{\bf Declarations}

{\bf Conflict of interest} This manuscript has no Conflict of interest.

\bibliographystyle{siam}

\end{document}